\newtheorem{theorem}{Theorem}
\newtheorem{algorithm}{Algorithm}
\newtheorem{lemma}{Lemma}
\let\emptyset\varnothing
\DeclarePairedDelimiter\normV{\lVert}{\rVert}
\DeclarePairedDelimiter\norm{\lVert}{\rVert}
\newcommand{\defeq}{\vcentcolon=}
\theoremstyle{remark}
\let\emptyset\varnothing
\DeclareMathOperator{\EX}{\mathbb{E}}
\theoremstyle{definition}
\begin{document}

\title{Rate-Splitting Multiple Access in Cache-Aided Cloud-Radio Access Networks}
\author{\IEEEauthorblockN{Robert-Jeron Reifert, \IEEEmembership{Student Member, IEEE}, Alaa Alameer Ahmad, \IEEEmembership{Member, IEEE}, Yijie Mao, \IEEEmembership{Member, IEEE}, Aydin Sezgin, \IEEEmembership{Senior Member, IEEE}, and Bruno Clerckx, \IEEEmembership{Senior Member, IEEE}\\}	
\thanks{This work has been partially funded by the Federal Ministry of Education and Research (BMBF) of the Federal Republic of Germany (F\"orderkennzeichen 01IS18063A, ReMiX) and by the U.K. Engineering and Physical Sciences Research Council (EPSRC) under grant EP/N015312/1, EP/R511547/1. \newline
\IEEEauthorblockA{\indent Robert-Jeron Reifert, Alaa Alameer Ahmad and Aydin Sezgin are with Digital Communication Systems, Ruhr-Universit\"at Bochum, Bochum, Germany. (Email: \{robert-.reifert,alaa.alameerahmad,aydin.sezgin\}@rub.de)}\newline
\indent Yijie Mao and Bruno Clerckx are with with the Communications and Signal Processing Group, Department of Electrical and Electronic Engineering, Imperial College London, U.K. (Email: \{y.mao16, b.clerckx\}@imperial.ac.uk)
}
}
\maketitle
\begin{abstract}
Rate-splitting multiple access (RSMA) has been recognized as a promising physical layer strategy for 6G. Motivated by ever increasing popularity of cache-enabled content delivery in wireless communications, this paper proposes an innovative multigroup multicast transmission scheme based on RSMA for cache-aided cloud-radio access networks (C-RAN). Our proposed scheme not only exploits the properties of content-centric communications and local caching at the base stations (BSs), but also incorporates RSMA to better manage interference in multigroup multicast transmission with statistical channel state information (CSI) known at the central processor (CP) and the BSs. At the RSMA-enabled cloud CP, the message of each multicast group is split into a private and a common part with the former private part being decoded by all users in the respective group and the latter common part being decoded by multiple users from other multicast groups. Common message decoding is done for the purpose of mitigating the interference. In this work, we jointly optimize the clustering of BSs and the precoding with the aim of maximizing the minimum rate among all multicast groups to guarantee fairness serving all groups. The problem is a mixed-integer non-linear stochastic program (MINLSP), which is solved by a practical algorithm we proposed including a heuristic clustering algorithm for assigning a set of BSs to serve each user followed by an efficient iterative algorithm that combines the sample average approximation (SAA) and weighted minimum mean square error (WMMSE) to solve the stochastic non-convex sub-problem of precoder design. Numerical results show the explicit max-min rate gain of our proposed transmission scheme compared to the state-of-the-art trivial interference processing methods. Therefore, we conclude that RSMA is a promising technique for cache-aided C-RAN.
\end{abstract}

\IEEEpeerreviewmaketitle

\thispagestyle{empty}

\section{Introduction}
\subsection{Overview}
The dramatically increasing popularity of smart communications devices has driven the current fifth generation (5G) and future beyond 5G (B5G) of wireless communication networks to handle a tremendous volume of mobile data-traffics \cite{emr}. Recently, wireless caching and physical layer multicast transmission has received considerable attention in the research community as two essential techniques to address the associated data-traffic challenges. Wireless caching brings the content closer to users, and hence helps reducing latency of the requested content delivery and avoiding network congestion. Moreover, multicast transmission in which the same content is transmitted to multiple users offers an ideal transmit platform that fits the feature of content-oriented services. Hence, in Video-on-Demand applications such as Youtube and Netflix, the same content is requested by multiple users which leads to a \textit{content-reuse} request type \cite{DBLP,8782879,6495773}. In its most general form, multigroup multicast transmission, enables simultaneous transmission of multiple distinct messages to several multicast groups of users \cite{4443878}, which however, introduces inter-group interference. Conventionally, the interference is addressed by transmit beamforming design based on treating interference as noise (TIN) strategy at each user \cite{7925639,7488289,6123786,5074583}. However, in general, such strategy is sup-optimal, especially in moderate to strong interference scenarios.

Recently, rate-splitting multiple access (RSMA)-based transmission strategy in which the transmitter splits the message of each user has been widely studied in multi-antenna broadcast channel \cite{9158344} and applied in multigroup multicast transmission \cite{8019852} and cloud-radio access network (C-RAN) \cite{9217249}. 
Specifically, in multigroup multicast transmission, by splitting the group-specific messages into two parts and allow one part (which is known as a common part) to be decoded by multiple users in different multicast groups, RSMA enables partially decoding the inter-group interference and partially treating the inter-group interference as noise. It therefore better manages the inter-group interference. 
The concept of RSMA dates back to the late 1970's when it was first suggested by \cite{Carleial78}. The scheme is shown to attain the best achievable rate region for the two-users interference channel (IC) in \cite{1056307} and later in \cite{4675741}, the authors show that such strategy can achieve to within one-bit of the capacity of two-users IC. Note that recent works mostly assume instantaneous (perfect or imperfect) channel state information (CSI) at the transmitter (CSIT). Contrary to that, in this paper, motivated by the appealing performance of RSMA in statistical CSIT in \cite{ahmad2020rate,yin2021ratesplitting}, we focus on statistical CSIT which is more practical as it requires little communication overhead to acquire.  

This paper considers the problem of optimizing RSMA-based transmission strategies in a C-RAN and makes use of multicast and caching techniques for efficient content delivery. C-RAN is a promising network architecture which provides an ideal platform to enable the B5G wireless systems handling the challenges of mobile communication service applications. 
Under such an architecture, a central processor (CP) located at the cloud employs most but not all baseband processing functions, e.g., encoding of the private and common messages of RSMA, which allows joint optimization of the resources and enables full cooperation amongst the set of base stations (BSs) controlled by the CP. The BSs that are distributed throughout the network are connected to the CP through high-speed digital fronthaul links with limited capacity. 

A vital part of the system model is the location of different parts of baseband processing and the use of the fronthaul link, this is referred to as functional split. Functional splits in C-RAN can be divided into two major categories, referred to as \emph{compression} and \emph{data-sharing} strategies \cite{97813165296}. Under each strategy, there are different usages of the fronthaul link and the baseband processing tasks are allocated in a different way. Utilizing the \emph{compression} strategy, the CP performs every processing task leaving only the radio transmission to the respective BSs. To be more specific, the CP designs the beamforming vectors and performs compression and quantization of the signals to be transmitted. The fronthaul links are then used to forward the signals to the BSs, which thereafter only need to perform transmission. However, the quantization process, necessitated by the limited capacity of the fronthaul links, generates quantization noise diminishing the system performance \cite{6588350}.
As opposed to this, using \emph{data-sharing}, the CP performs channel encoding and the BSs then continue with the remaining baseband processing tasks \cite{DaiY14}. Therefore, under \emph{data-sharing} strategy, the BSs have more computational resources and responsibilities. The computational load is balanced between centralized computations at the CP and local computation at the BSs.
Apart from that, other functional splits between the CP and the BSs can be employed thanks to the flexibility of the C-RAN system model. To this end, in this paper, we utilize the \emph{data-sharing} strategy. 
The authors in \cite{7809154} have shown that utilizing the \emph{data-sharing} strategy results in better performance compared to using the \emph{compression} strategy, especially when the fronthaul capacity is limited. \\
\indent Recently, assisting C-RAN with RSMA-based transmission strategies has shown considerable gain in the performance compared to conventional TIN \cite{9145076,8732995,ahmad2020rate,9145363}. Next, we discuss the related works and their relation to our setup.
\subsection{Related Work}
Wireless caching has been studied in many research works as an efficient content-based communication. Motivated by the extreme popularity of video streaming applications, the seminal work \cite{6600983} has shown that small cell dense networks can improve the spectral efficiency of the wireless system. The work \cite{8732315} studies a cache-enabled broadcast-relay wireless network from a latency-centric perspective. Furthermore, the authors in \cite{6600983} proposed to equip the BSs with local memory to cache the most popular content in order to alleviate the bottleneck of fronthaul/backhaul communications and efficiently provide the video content to users. Such a network architecture is best described by C-RAN model. Cache-assisted wireless networks have been intensively explored recently. In \cite{7499119} the authors show that coded caching can significantly reduce the fronthaul and transmit power costs. The authors in \cite{7925639} proposed an efficient algorithm to solve the combinatorial optimization problem of content delivery in a cache-assisted C-RAN. In \cite{8269405} the authors investigate a trade-off cache strategy to jointly minimize the outage probability and fronthaul costs in C-RAN. In the research work \cite{7488289} the authors have shown the benefits of integrating  multicast transmission with local caching in reducing the network-wide fronthaul and transmit power costs in C-RAN. \\
\indent Multicast transmission is essential in content-based applications such as video streaming and video on demand \cite{7537173}, but also in other application areas such as satellite communications \cite{7091022,7765141}. Yet, the seminal work \cite{1634819} has shown that the inherent beamforming vector optimization, even for a single multicast group, is an NP-hard problem. Thus, with multigroup multicast we need to account for inter-group interference which represents the bottleneck in achieving a good performance in the system \cite{8019852}. \\ \indent
Interference is one major research challenge in wireless communication. 
Nevertheless, the authors in \cite{4675741} have shown that a RSMA strategy as proposed in \cite{Carleial78, 1056307} can achieve within one-bit of the capacity of two-users IC. Motivated by the theoretical works, the authors in \cite{5910112} show that RSMA transmission scheme can significantly reduce the transmit power costs in multi-cell multi-users communication networks. Recently, several works illustrated the benefits of RSMA in flexibly managing the interference in downlink multiple input single output broadcast channels (MISO-BC) \cite{7513415,7434643,7555358,7867076,8907421,7470942, Mao4 ,Mao,9120024,7972900,mao2019dirty}. In \cite{Mao}, it is shown that RSMA generalizes and outperforms other downlink transmission schemes such as TIN and non-orthogonal multiple access (NOMA). Moreover, information theoretical studies have shown that RSMA achieves the optimal degrees of freedom (DoF) in MISO-BC with imperfect CSIT \cite{7555358, 7972900, mao2019dirty}, which therefore, making all other strategies achieve only the suboptimal DoF in the same setting. The works \cite{8007039,8703386} investigated RSMA cache-aided MISO-BC. In \cite{8007039}, the authors proposed a RSMA scheme combining coded-caching and spatial multiplexing in an overloaded MISO-BC which yields gains in terms of delivery time and CSIT requirements compared to state-of-the-art schemes. The work \cite{8703386} characterized the generalized DoF of the symmetric cache-aided MISO-BC under partial CSIT up to a constant multiplicative factor. The setup of multigroup multicast with RSMA in downlink MISO-BC has been studied in \cite{8019852,8445878}. The authors have investigated the problem of maximizing the minimum rate to guarantee fairness among all multicast groups. RSMA with imperfect CSIT and the goal to maximize the minimum achievable rate among all multicast groups was studied in \cite{9257433} for multigroup multicast and multibeam sattelite systems. \\ 
\indent Interestingly, MISO-BC represents a special case of C-RAN where the capacity of fronthaul links tends to infinity \cite{97813165296}. RSMA assisted C-RAN has been recently studied in several works. In \cite{8732995} the authors show significant gain in terms of spectral efficiency in a downlink C-RAN system where the CP has full knowledge of CSIT. In \cite{ahmad2020rate} RSMA has shown to be robust against imperfections of CSI. In \cite{8756076} the authors apply RSMA technique in a C-RAN where the CP uses a compression strategy to transfer information from the CP to the set of BSs. Other benefits of RSMA for enhancing the performance of downlink C-RAN have been shown in terms of minimizing the transmit power cost \cite{9145363} and maximizing the energy efficiency of the system \cite{9145076}. 
Departing from all previous works, this paper studies an RSMA-assisted downlink C-RAN where the BSs are equipped with local cache memory. The CP assorts to a multigroup multicast transmission scheme for efficient content-delivery and aims at maximizing the minimum rate (a.k.a. max-min fairness (MMF) rate) of all multicast groups. In the next subsection we discuss the contributions made in this paper in more details.
\subsection{Contributions}
In contrast to the recent related literature, to the best of the authors knowledge, this is the first work that considers an RSMA-assisted cache-enabled C-RAN. We focus on the multigroup multicast transmission and consider the maximization of the minimum rate of all multicast groups in the network. The major contributions in this work are summarized as follows:
\begin{itemize}
\item[1)] \textit{Comprehensive System Model:} This paper considers a novel C-RAN system model that integrates caching, RSMA, and multigroup multicast transmission. The functional split between CP and BSs is based on a \emph{data-sharing} strategy. We enable caching at the BSs as they are equipped with local memories to store the popular content closer to the end users. This helps alleviating the traffic on the fronthaul capacity, as BSs that cache a requested file are able to process the data locally. Multigroup multicasting is considered to account for the properties of content-based transmissions by grouping all users requesting the same content in the same multicast group. To mitigate the inter-group interference that limits the performance, we consider utilizing RSMA. In RSMA, each group-specific message is split and encoded into two streams, namely, a private stream that is decoded by all users in the multicast group and treated as noise by all users within other multicast groups, and a common stream that is decoded by multiple users in different multicast groups. Hence, the users are able to not only decode their respective private messages, but also decode a subset of common messages dedicated to other users so as to reduce the interference throughout the network.
\item[2)] \textit{Problem Formulation:} Herein we formulate a resource allocation problem to maximize the minimum achievable rate guaranteeing fairness among all multicast groups in the cache-enabled RSMA-assisted C-RAN. Additionally, we consider that the CP only knows the channel statistics, i.e., we consider statistical CSIT, in contrast to the full CSIT assumption adopted by most works in C-RAN literature. The considered problem is a mixed-integer non-linear stochastic program (MINLSP) which is known to be NP-hard.
\item[3)] \textit{Proposed Algorithms:} To tackle such a challenging problem, we propose an optimization framework consisting of a novel multicast group-based clustering algorithm, the sample average approximation (SAA), and a weighted minimum mean square error (WMMSE)-based algorithm. 
\item[4)] \textit{Numerical Simulations:} Through extensive numerical simulations we evaluate the performance of our proposed scheme. To benchmark the performance, we compare our proposed RSMA scheme with classical TIN and a single common message RSMA (SCM-RSMA) scheme, which defines a single common message that is decoded by all users. Results imply that the performance gain of the proposed RSMA scheme over the benchmarks increases with fonthaul capacity and also with the number of BSs. The gain is most significant in low fronhaul regime with small cache sizes, as another result shows. Moreover, the MMF-rate of the proposed RSMA scheme increases with the utilization of bigger caches and also more transmit antennas. With increasing number of users, we observe a gain of the multigroup multicast transmission over using a simpler transmission scheme. At last, we highlight the numerical features of our proposed method.

\end{itemize}   
\subsection{Notations \& Organization}
\begin{table}[b]
	\caption{List of Mathematical Notations}\vspace*{-.3cm}
	\label{tb:notations2}
	\centering
	\newcolumntype{L}[1]{>{\raggedright\arraybackslash}p{#1}}
	\begin{tabular}{|L{0.10\textwidth}|L{0.7\textwidth}|}
		\hline
		Notation & Definition\\
		\hline
		\hline
		$\mathcal{N}$ & Set of BSs\\
		$\mathcal{K}$ & Set of users\\
		$\mathcal{G}$ & Set of multicast groups\\
		$\mathcal{G}_g$ & Set of users in multicast group $g$\\
		$\mathcal{F}$ & Library of files\\
		$\mathcal{G}_n^p$,$\mathcal{G}_n^c$ & Subset of BSs serving the private/common stream of group $g$\\
		\hline
		$\mathcal{M}_g$ & Set of users decoding $s_g^c$\\
		$\Phi_{k}$ & Set of groups whose common messages are decoded by user $k$\\
		$\pi_{k}$ & Decoding order at user $k$\\
		& $\pi_{k}(g_1)>\pi_{k}(g_2)$: user $k$ decodes $g_2$'s common message first\\
		$\Psi_{i,k}$ & Set of groups whose common messages are decoded by user $k$ after decoding $s_i^c$\\
		$\tilde{\Psi}_{i,k}$ & Set of groups whose common messages are decoded by user $k$ before decoding $s_i^c$\\
		\hline
	\end{tabular}
\end{table}%
As for the notations of this paper, we denote vectors, matrices, and sets as boldface lower-case, boldface-capital, and calligraphic letters, respectively (e.g., $\mathbf{x}$, $\mathbf{X}$, $\mathcal{X}$). The cardinality of a set $\mathcal{X}$ is given by $|\mathcal{X}|$. A vectorization operator $ {\rm vec}(\cdot) $ is used throughout the paper. To be more specific, $ {\rm vec}(\mathcal{X}) $ is a column vector holding all elements of the set $\mathcal{X}$. It is defined as $ {\rm vec}(\mathcal{X}) \equiv [x_1,\cdots,x_N]^T $ or $ {\rm vec}(\mathcal{X}) \equiv [\mathbf{x}_1^T,\cdots,\mathbf{x}_N^T]^T $ when the elements in $\mathcal{X}$ are scalars or vectors, respectively. Also, $\mathbb{R}$ and $\mathbb{C}$ denote the real and complex field and the expectation of a random variable writes $\mathbb{E}\{\cdot\}$. The transpose and hermitian transpose operators are denoted as $\left( \cdot \right)^{T}$ and $\left( \cdot \right)^{H}$, respectively. Finally, $\left|\cdot\right|$ is the absolute value and $\normV[]{\cdot}_{p}$ is the $l_p$-norm. \\
\indent Please refer to Table~\ref{tb:notations2} for a list of mathematical notations used throughout this work. We pursue the following structural organization in this paper: In Section~\ref{sec:SM} we present our system model consisting of signal, channel, cache, and receiver models. The considered optimization problem is formulated in Section~\ref{sec:PF} followed by the optimization algorithms we proposed for solving the problem including a clustering algorithm to determine the BS clustering, as well as an SAA and WMMSE-based algorithm to optimize the precoders. To evaluate the performance of the proposed algorithm, we conduct numerical simulations in Section~\ref{sec:NS}. Finally, Section~\ref{sec:C} concludes the paper.
\section{System Model}\label{sec:SM}

\begin{figure}
	\begin{center}
		\includegraphics[width=0.7\linewidth]{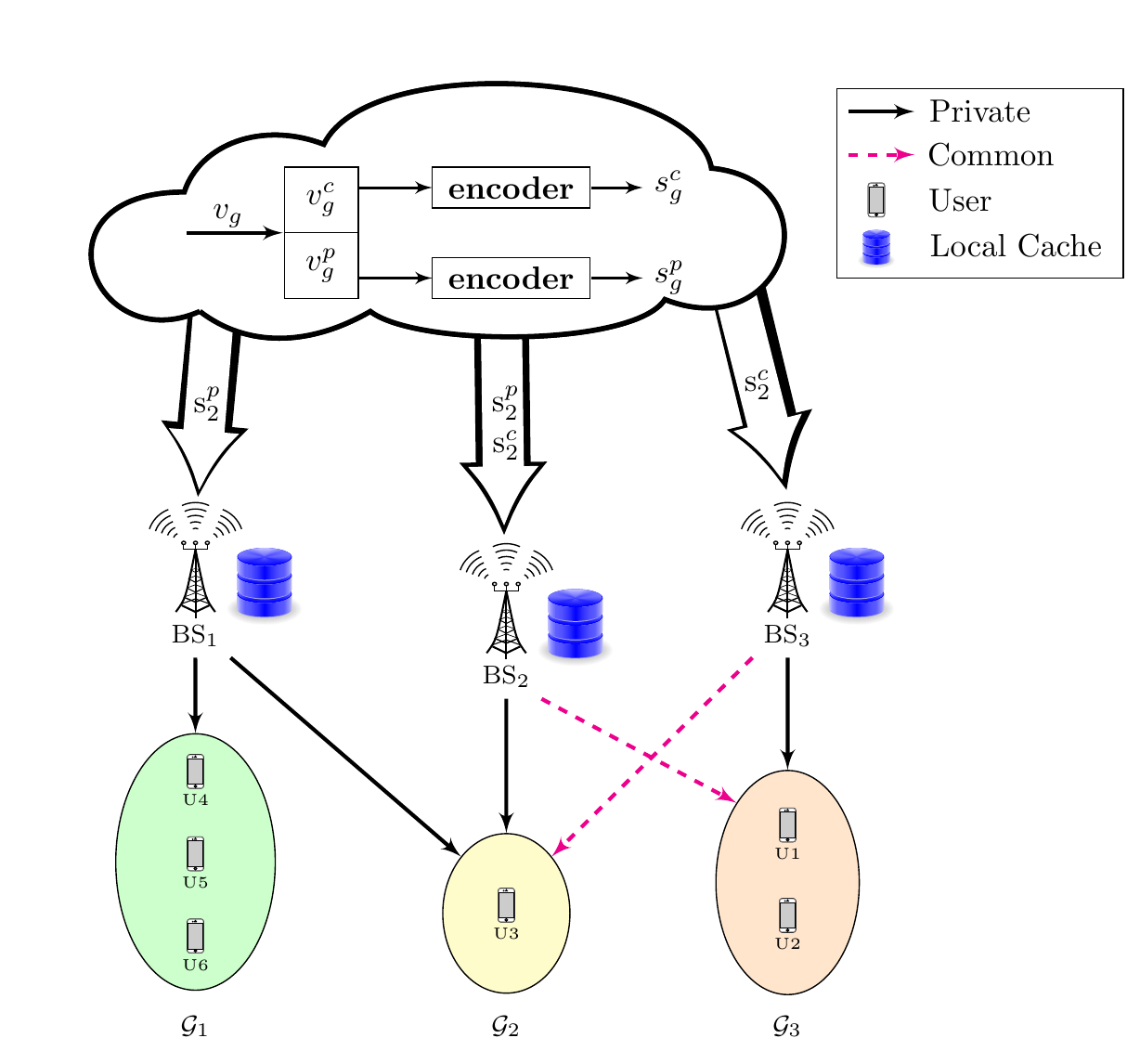}
		\caption{A C-RAN system with three BSs serving three groups of users with multicast messages. Private and common messages are encoded at the cloud.  As a simple illustration we show the shared private and common streams of group $\mathcal{G}_2$}
			\label{Fig1}
	\end{center}

   \vspace{-0.5cm}
\end{figure}
We consider a C-RAN system operating in downlink mode with a transmission bandwidth $B$. The network consists of a set of multiple-antenna BSs, $\mathcal{N} = \left\lbrace1,2,\ldots,N\right\rbrace$, serving a group of single-antenna users indexed by $\mathcal{K} = \left\lbrace1,2,\ldots,K\right\rbrace$. Each BS $n \in \mathcal{N}$ is equipped with $L > 1$ antennas and a local cache memory of finite size. BS $n$ is connected to a CP, located at the cloud, via a fronthaul link of capacity $C_n^{\text{max}}$. The CP has access to the library of files containing the set $\mathcal{F} = \left\lbrace1,2,\ldots,F\right\rbrace$, where the total number of files is $F$. At the beginning of each transmission block, each user in the network submits a request to receive a specific file from the library at the cloud. We assume that each requested content can be fetched and processed by the CP before sharing it with a cluster of BSs for the respective users. The users submit their demands according to a specific demand probability and we assume the probability distribution of the users demands are available at the CP. All users requesting the same content are grouped together and served with a multicast transmission from a cluster of BSs as shown in Fig.~\ref{Fig1}. 
Without loss of generality, we consider that the CP divides each requested file into several data chunks, so that the transmission of each file may take place on several consecutive transmission blocks and the number of required transmission blocks to transmit each file may be different from other files. Let the total number of multicast groups be $G$ such that $1 \leq G\leq \text{min}\{F,K\}$. The set of all multicast groups is given by $\mathcal{G} = \left\lbrace 1,\ldots,G\right\rbrace$ and the set of users in each group $g$ is denoted as $\mathcal{G}_g$. We assume that each user in the network can submit a request to one content at a time, such that $\mathcal{G}_i\cap \mathcal{G}_j = \emptyset,  i\neq j $ and $\sum_{g=1}^{G}|\mathcal{G}_g| \leq  K$. Let $v_g(t)$ be the data chunk of the file requested by group $g$ at time slot $t$.
 \subsection{Received Signal Model}
 In downlink C-RAN, the data chunks in time-slot $t$ are encoded at the CP into streams $s_g(t)$, $\forall g\in \mathcal{G}$.
 In \emph{data-sharing}, due to limited capacity of the fronthaul links, the CP shares private and common parts of $s_g(t)$ with a subset of BSs that are in the serving cluster of the multicast group $g$ and that do not cache the content locally. However,  the file is processed locally at the BS $n$ if it caches the requested content and participates in transmitting it to the intended multicast group. Hence, by using local caches the system can balance the traffic load on fronthaul links. Upon receiving these signals, the selected cluster of BSs cooperatively transmit the coded streams by cooperative beamforming. Hence, BS $n$ constructs $\mathbf{x}_{n} \in \mathbb{C}^{L\times1}$, and sends it according to the following transmit power constraint
 \begin{equation}\label{eq:e1}
 \EX\left\lbrace \mathbf{x}_{n}^{H}(t)\mathbf{x}_{n}(t)\right\rbrace \leq P_{n}^{\text{max}} \quad \forall n \in \mathcal{N},
 \end{equation}
 where $P_{n}^{\text{max}}$ is the maximum transmit power available at BS $n$.
 Let $\mathbf{h}_{n, k}(t) \in \mathbb{C}^{L\times1}$ denote the channel vector between BS $n$ and user $k$, and $\mathbf{h}_k(t) = {\rm vec}\left(\left\{\mathbf{h}_{n, k}(t)|\,\, n \in \mathcal{N}\right\}\right)  \in \mathbb{C}^{NL \times 1}$ be the aggregate channel vector of user $k$. We can write the received signal at user $k$ as 
 \begin{equation}\label{eq:e2}
 y_k(t) = \mathbf{h}_{k}^H(t) \mathbf{x}(t) + n_k(t),
 \end{equation}
 where $n_k(t) \sim \mathcal{CN}\left(0, \sigma_k^2\right)$ is the additive white Gaussian noise (AWGN), and we have $\mathbf{x}(t) = {\rm vec}\left(\left\{ \mathbf{x}_{n}(t)| \,\, n \in \mathcal{N} \right\}\right)$. Without loss of generality, we assume the noise power to be the same for all users, i.e., $\sigma_k^2 = \sigma^2, \forall k \in \mathcal{K}$. 
 \subsection{Channel Fading Model and CSI Uncertainty}
Let us define the instantaneous channel state at time slot $t$ as $\mathbf{h}(t) \triangleq {\rm vec}\left(\left\lbrace\mathbf{h}_{k}(t) |\,\, \forall k\in\mathcal{K} \right\rbrace\right)  \in \mathbb{C}^{NLK\times 1}$. This paper considers a block-fading model in which the channel state $\mathbf{h}(t)$ remains constant over multiple time slots and may vary independently in a random fashion from one block to another according to some stochastic process. Specifically, in block $b$ with length $t_b$, the following relation in the block-fading model is satisfied
\begin{equation}
\mathbf{h}(t) = \mathbf{h}(b), \forall t \in \left\lbrace(b-1)t_b+1,\ldots, b t_b  \right\rbrace.
\end{equation}
Throughout the paper, we focus on optimizing the transmission scheme's parameters, e.g., beamforming vectors, and the resource allocation strategy using the available CSI. Hence, we next drop the dependency on the time variable $t$ for the brevity of notation and focus on the channel state in one transmission block. 

To this end, we assume that the channel between BS $n$ and user $k$ follows the distribution $\mathbf{h}_{n, k} \sim \mathbf{\mathcal{CN}}\left(0, \mathbf{Q}_{n, k}\right)$, where $\mathbf{Q}_{n, k}$ is a symmetric positive semi-definite matrix and depends mainly on the path-loss between BS $n$ and user $k$. 
Throughout the paper, we consider that the users (the receivers) can always estimate their channel vectors with high accuracy, i.e., we consider perfect CSI at the receiver (CSIR). This assumption is justified in practice, as CSIR can be estimated during the training phase with minimal communication overhead \cite{4383372, 10.5555/1111206}. Concerning the CSIT, the CP obtains the channel estimations from the BSs in the network. The BSs acquire the CSI at the beginning of each transmission block, conventionally through uplink training in time division multiplex (TDD) systems \cite{1193803}, or via quantized feedback links in frequency division multiplex (FDD) systems \cite{4641946}. In contrast to CSIR, obtaining high accuracy CSIT requires huge communication overhead and therefore, assuming full CSIT knowledge is somewhat optimistic, and in practice, it might not be possible, especially in dense networks. In this work, we investigate two cases concerning the CSIT assumption:
\begin{itemize} 
	\item{\textbf{{Case 1:}} The CP estimates the channel state perfectly and the error due to quantized feedback or during the uplink training is negligible, i.e., we assume full CSIT. In this case the CP has knowledge of all elements in the vector $\mathbf{h}$}. Obviously, full CSIT case involves a large communication overhead between the users and the CP, which requires a considerable amount of communication resources, which may not be affordable in dense networks.
	\item{\textbf{{Case 2:}} Alternatively, the CP can only access the matrices $\left\lbrace  \mathbf{Q}_{n,k}|\hspace{1mm} n\in \mathcal{N}, k \in \mathcal{K}\right\rbrace $}, i.e., the CP knows the channel statistics of all users. This case is referred to as statistical CSIT, as the CP does not know the channel coefficients $\left\lbrace \mathbf{h}_{n,k}|\hspace{1mm} n \in \mathcal{N}, k \in \mathcal{K} \right\rbrace$ exactly, but their covariance matrix is available to the CP. Note that the perfect estimate of the channel statistics can be easily obtained with minimal communication overhead, as it depends mainly on the user locations, which can be accurately estimated using off-the-shelf global positioning system (GPS) devices \cite{8664604}.      
\end{itemize} 
In the next subsection, we describe the transmission scheme which combines beamforming with RSMA and clustering. 
\subsection{Beamforming, Signal-Construction and Clustering}
The proposed transmission scheme consists of RSMA, BS cluster design for \emph{data-sharing}, and cooperative beamforming to transmit the private and common streams to the intended multicast groups. The data chunk of the file requested by group $g$, i.e., $v_g$, is split at the CP (or locally at the BS $n$) into a private part denoted by $v_{g}^p$ and a common part denoted by $v_{g}^c$.
Afterwards, the CP encodes the private and common parts into $s_{g}^p$ and $s_{g}^c$, respectively, as illustrated in Fig.~\ref{Fig1}. The encoded streams $s_{g}^p$ and $s_{g}^c$ are shared via fronthaul links with BSs that participate in transmitting to the multicast group $g$ and do not cache the requested content. This RSMA strategy is referred to as RSMA and common message decoding (RS-CMD), throughout the paper. Let $\mathbf{w}_{n,g}^p$ and $\mathbf{w}_{n,g}^c$ be the beamforming vectors of BS $n$ to serve the multicast group $g \in \mathcal{G}$. The aggregate beamforming vector from all BSs to a group $g$ can thus be defined as $\mathbf{w}_{g}^o = {\rm vec}\left(\left\{\mathbf{w}_{n,g}^o | \,\, \forall n\in\mathcal{N}\right\}\right)$, where $o\in\{p,c\}$. Note that the beamforming vectors of some BSs not participating in serving group $g$ are vectors where all elements are zero, i.e., $\mathbf{w}_{n,g}^o = \mathbf{0}_L$. Whether a BS $n$ is selected to be serving group $g$ or not is dependent on the proposed clustering algorithm. The respective rates of the private and common streams of multicast group $g \in \mathcal{G}$ are denoted by $R_g^p$ and $R_g^c$. 

The BSs that already cached the requested content perform the required baseband processing tasks locally when participating in transmission to the intended multicast groups. However, we emphasize the fact that the beamforming vectors are jointly optimized at the CP to design the coordinated beamforming strategy. After that, the beamforming coefficients, are shared directly with their respective cluster of BSs\footnote{We ignore the fronthaul bandwidth required to support the transmission of beamforming coefficients and we only consider the capacity required to deliver the encoded streams to the BSs in the serving cluster}. \\
Let $\mathcal{G}_n^p,\mathcal{G}_n^c \subseteq \mathcal{G}$ be the subset of groups served by BS $n$ with a private or common message, respectively, i.e.,
\begin{align}
\label{eq:e3}
\mathcal{G}_{n}^{p} & \defeq \left\lbrace g \in \mathcal{G}|\hspace{1mm} \text{BS} \hspace{1mm} n \hspace{1mm} \text{delivers} \hspace{1mm} s_{g}^p \hspace{1mm} \text{to multicast group}\hspace{0.8mm} g\right\rbrace,\\
\label{eq:e4}
\mathcal{G}_{n}^{c} & \defeq \left\lbrace g \in \mathcal{G}|\hspace{1mm} \text{BS} \hspace{1mm} n \hspace{1mm} \text{delivers} \hspace{1mm} s_{g}^c \hspace{1mm} \text{to multicast group}\hspace{0.8mm} g\right\rbrace.
\end{align}
\subsection{Cache Model}
In this paper, we consider storing content closer to the users in local cache memories at the BSs. Associated with this content delivery process, we distinguish between \textit{cache placement} and \textit{cache delivery} phases \cite{8008769,6763007}.
While parts of recent related literature studied efficient designs of \textit{cache placement} strategies to improve the overall content delivery, other works aimed at optimizing the \textit{cache delivery} process for fixed \textit{cache placement}. Generally, we can say that the \textit{cache placement} phase spans over a wider time-scale than the \textit{cache delivery} phase, since the requested content's popularity changes slowly with time. The particular execution of the \textit{cache placement} phase is done in order to significantly improve the \textit{cache delivery} phase, especially during peak-traffic periods.\\
\indent Herein, we aim to optimize the \textit{cache delivery} phase and assume the \textit{cache placement} to be known a priori at the CP \cite{7499119}. Therefore, we let $\mathbf{C} \in \left\lbrace 0,1\right\rbrace^{F \times N} $ be the binary cache placement matrix where $\left[\mathbf{C}\right]_{f, n} = c_{f, n} = 1 $ means that the file $f$ is cached at BS $n$ and $c_{f, n} = 0$ means it is not. All users in the multicast group $g$ request the same file $f_g \in \mathcal{F}$. A cache hit means, a BSs that caches $f_g$, i.e., $c_{f_g, n} = 1$, processes the data locally and there is no need to burden the fronthaul link upon serving group $g$. Otherwise, i.e., $c_{f_g, n} = 0 $, means that in order for $n$ to serve group $g$ it has to receive data from the CP.
Utilizing the binary cache placement matrix has an impact on the mathematical formulation of the fronthaul constraint. Based on the instantaneous rates $R_g^p$ and $R_g^c$ and the serving clusters defined in \eqref{eq:e3} and \eqref{eq:e4}, the fronhaul capacity constraint of all BSs in the proposed C-RAN is
\begin{equation}
	\sum_{{g \subseteq \mathcal{G}_n^p}}(1-c_{f_g, n}){R}_{g}^{p} + \sum_{{g \subseteq \mathcal{G}_n^c}} (1-c_{f_g, n}) {R}_{g}^{c} \leq C_n^{\text{max}}, \qquad \forall n \in \mathcal{N}. \label{eq:front}
\end{equation}
Disregarding the caching ability, all common and private rates of messages served by BS $n$ contribute to the sum and burden the fronthaul link of $n$. This leads to fronthaul congestion, especially when the serving clusters include many groups or the fronthaul capacity is limited. Therefore, in \eqref{eq:front}, if the file requested by group $g$, i.e., $f_g$, is cached by BS $n$, then $c_{f_g,n} = 1$. Thus the common and private rates of $g$ do not contribute to the sum saving fronthaul capacity. Next, we elaborate on RSMA and define our receiver model, also, we further specify the received signal formulation.
\subsection{Receiver Model and Instantaneous Achievable Rates}\label{sec:receivermodel}
In the context of this paper, common messages are employed for the sole purpose of mitigating interference in C-RAN to achieve better resource allocation. Hence, in a C-RAN system that deploys RSMA, each user in a multicast group is expected to decode multiple messages. Thus, the order in which user $k$ decodes the intended messages plays an important role in assessing the efficiency of the relevant proposed interference mitigation techniques. Although joint decoding of all common and private messages at user $k$ would result in optimized rates, it's implementation is complicated in practice. Particularly when the network is large, the intended set of messages to be decoded by each user is large. However, the classical information-theoretical results of a two-user IC already suggest that decoding a strong interferer's common message can significantly improve a user's achievable rate \cite{4675741}. From this perspective, in this paper, we focus on a successive decoding strategy. User $k$ decodes a subset of all common messages in a fixed decoding strategy, based on the descending order of the interferers' channel gains, as described next. 

Each user deploys successive interference cancellation (SIC) to remove parts of the interference in a successive order. A block diagram of the SIC at user $1$ is given in Fig.~\ref{sic}. From Fig.~\ref{sic}, it becomes obvious that the set of common messages that user $k$ is decoding and the order in which the messages are decoded plays an essential role in characterizing the SIC at the users. \\
To this end, let $\mathcal{M}_g$ denote the set of users decoding $s_g^c$, i.e.,
\begin{equation} \label{eq:e2.12}
\mathcal{M}_g \triangleq \left\lbrace j \in \mathcal{K}| \hspace{1mm}\text{user} \hspace{1mm}j\hspace{1mm}  \text{decodes} \hspace{1mm}s_{g}^c \right\rbrace.
\end{equation}%
\begin{figure}
	\centering
	\includegraphics[scale=.7]{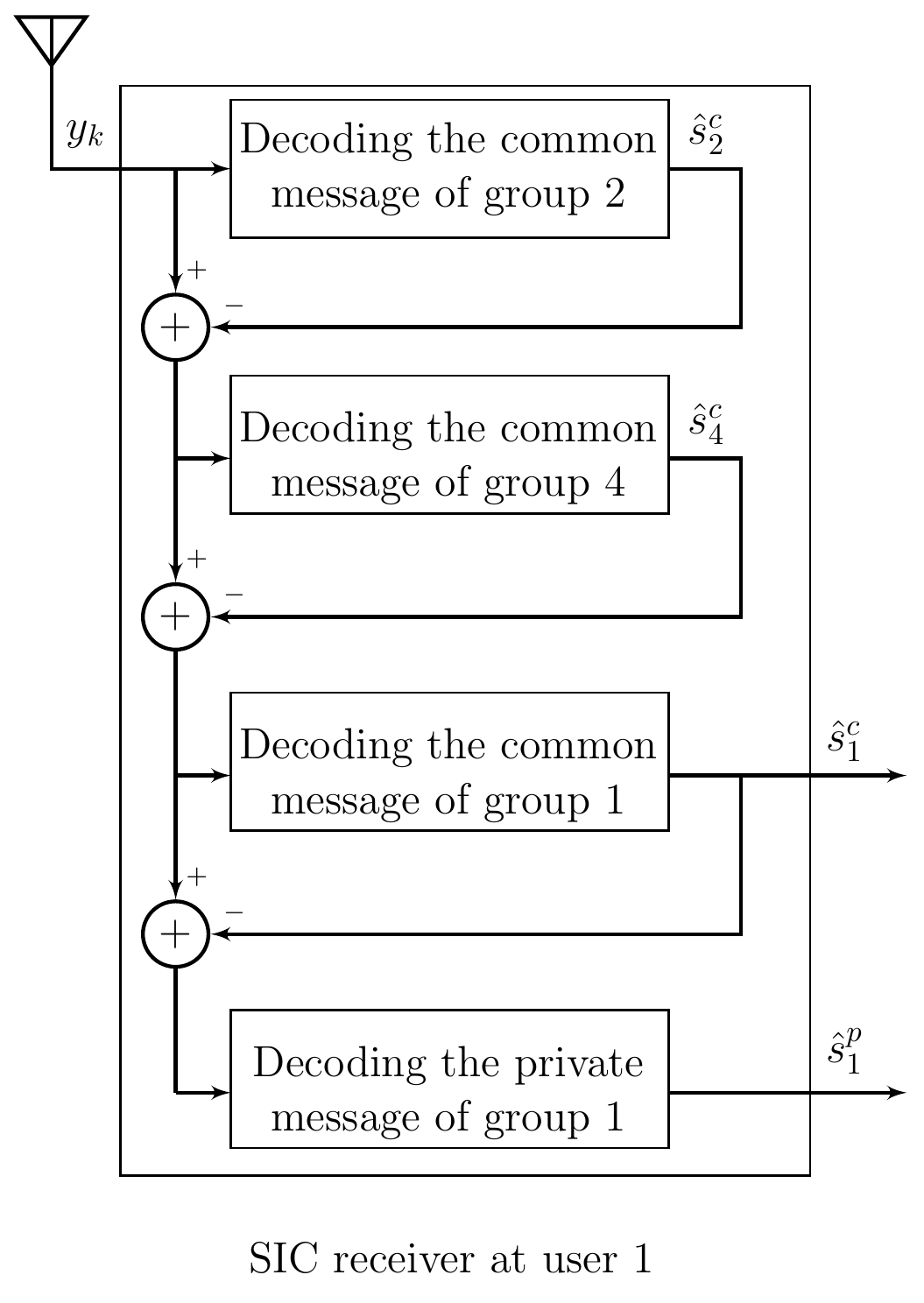}
	\caption{A block diagram for an SIC at user 1. In this example, the common messages decoded at user 1 are $\Phi_{1} = \{1, 2, 4\}$. The decoding order at user 1 is $\pi_1 =  \{2, 4 ,1\}$.}
	\label{sic}
\end{figure}%
Indices of groups whose common messages are decoded by user $k$ are given by
\begin{align} \label{eq:e2.13}
\Phi_{k} \triangleq \left\lbrace g \in \mathcal{G}|\hspace{1mm} k \in \mathcal{M}_{g} \right\rbrace.
\end{align}
The remaining groups whose common messages are not decoded by user $k$ are included in set $\Omega_k$.
We note that once the set $\mathcal{M}_g$ is found, we can determine the set $\Phi_{k}$, and vice-versa.
The choice of $\Phi_{k}$ (and consequently $\mathcal{M}_g$) has a crucial impact on the achievable rate of multicast group $g$.
Consider the following decoding order at user $k$
\begin{equation} \label{eq:e2.14}
\pi_{k}:  \Phi_{k} \rightarrow \left\lbrace 1, 2, \ldots,\left|\Phi_{k}\right|  \right\rbrace,
\end{equation}
which represents a bijective function of the set $\Phi_{k}$ with cardinality $\left|\Phi_{k}\right|$,
i.e., $\pi_{k}(g)$ is the successive decoding step in which the common message of multicast group $g$, i.e., $s_g^c \in \Phi_k$, is decoded at user $k$. {In other terms, $\pi_{k}(g_1) > \pi_{k}(g_2)$ (where $g_1 \neq g_2$) implies that user $k$ decodes the common message of multicast group $g_2$ first, and then the common message of multicast group $g_1$.
	Now, we can rewrite ${y}_k$, the received signal at user $k$ in the multicast group $g$ as follows
	\begin{align}\label{eq:e2.15}
	y_k = & \underbrace{\left(\mathbf{h}_{k}^H\mathbf{w}_{g}^p {s}_{g}^p + \sum_{j \in \Phi_{k}}\mathbf{h}_{k}^H\mathbf{w}_{j}^c {s}_{j}^c\right)}_{\text{Signals to be decoded}} + \underbrace{\sum_{m \in \mathcal{G}\setminus \{g\}}\mathbf{h}_{k}^H\mathbf{w}_{m}^p {s}_{m}^p + \sum_{l \in \Omega_{k}}\mathbf{h}_{k}^H\mathbf{w}_{l}^c {s}_{l}^c+ n_k}_\text{Interference plus noise}.
	\end{align}
	User $k$ then uses SIC to remove the common messages in set $\Phi_{k}$ from the received signal $y_k$. The common messages are successively decoded according to the decoding order given by $\pi_{k}$. The common message decoding is solely performed to manage the interference and improve the detectability of the private message, which is decoded last. The average power of the messages received at user $k$ when decoding the private stream $s_g^p$ and the common stream $s_i^c$ of multicast group $i$, respectively, are given as
	\begin{align}
	\label{eq:e2.16}
	T_{g,k}^p & = \left|\mathbf{h}_{k}^{H}\mathbf{w}_{g}^p \right|^2 +  \underbrace{\sum\limits_{j \in \mathcal{G}\setminus \{g\}}\left|\mathbf{h}_{k}^{H}\mathbf{w}_{j}^p \right|^2 + \sum\limits_{l \in \Omega_k}\left|\mathbf{h}_{k}^{H}\mathbf{w}_{l}^c \right|^2  + \sigma^2}_{I_{g,k}^p}, \\
	\label{eq:e2.17}
	T_{i, k}^c & = \left|\mathbf{h}_{k}^{H}\mathbf{w}_{i}^c \right|^2+\underbrace{\sum_{j \in \mathcal{G}}\left|\mathbf{h}_{k}^{H}\mathbf{w}_{j}^p \right|^2 + \sum\limits_{l \in \Omega_k}\left|\mathbf{h}_{k}^{H}\mathbf{w}_{l}^c \right|^2 + \sum\limits_{m  \in \Psi_{i, k}}\left|\mathbf{h}_{k}^{H}\mathbf{w}_{m}^c \right|^2 +\sigma^2}_{I_{i, k}^c},
	\end{align}
	where $\Psi_{i, k} \triangleq \left\lbrace m \in \Phi_k | \pi_{k}(m)> \pi_{k}(i) \right\rbrace$. $\Psi_{i, k}$ is a set including indices of all multicast groups $m$ whose common messages are decoded after decoding the common message of group $i$, i.e., groups having common messages with higher decoding order $\pi_{k}(m)>\pi_{k}(i)$. These multicast groups $m$ contribute to the interference since they have yet to be decoded. Also, $I_{g,k}^p$ and $I_{i,k}^c$ denote the interference-plus-noise at user $k$ decoding the private message of multicast group $g$ and the common message of group $i$, respectively.	
	Based on the expressions in \eqref{eq:e2.16} and \eqref{eq:e2.17}, we define the SINR of user $k$ from multicast group $g$, i.e., $k\in\mathcal{G}_g$, when decoding the private message of group $g$ and the SINR of user $k$ when decoding the common message of group $i$ as\\
	\begin{align}
	\label{eq:e2.18}
	\gamma_{g,k}^p &= \frac{\left|\mathbf{h}_{k}^{H}\mathbf{w}_{g}^p \right|^2}{\sigma^2 + \sum\limits_{j \in \mathcal{G}\setminus \{g\}}\left|\mathbf{h}_{k}^{H}\mathbf{w}_{j}^p \right|^2 + \sum\limits_{l \in \Omega_g}\left|\mathbf{h}_{k}^{H}\mathbf{w}_{l}^c \right|^2},\\
	\label{eq:e2.19}
	\gamma_{i, k}^c &= \frac{\left|\mathbf{h}_{k}^{H}\mathbf{w}_{i}^c \right|^2}{\sigma^2+\sum\limits_{j \in \mathcal{G}}\left|\mathbf{h}_{k}^{H}\mathbf{w}_{j}^p \right|^2 + \sum\limits_{l \in \Omega_{i}}\left|\mathbf{h}_{k}^{H}\mathbf{w}_{l}^c \right|^2 + \sum\limits_{m  \in \Psi_{i, k}}\left|\mathbf{h}_{k}^{H}\mathbf{w}_{m}^c \right|^2} .
	\end{align}   
	The instantaneous achievable rate of multicast group $g$ is given as $R_g = R_{g}^p + R_{g}^c$, where the instantaneous private and common rates satisfy the following achievability conditions
	\begin{align}
	\gamma_{g,k}^p &\geq 2^{R_{g}^p/B} - 1, \quad \forall k \in \mathcal{G}_g, \forall g \in \mathcal{G}, \label{eq:e2.20}\\
	\gamma_{g,k}^c &\geq 2^{R_{g}^c/B} - 1, \quad \forall k \in \mathcal{M}_g, \forall g \in \mathcal{G}. \label{eq:e2.21}
	\end{align}
	Note that the interference from sending the common message $s_g^c$ does not impact the users in $\mathcal{M}_g$ as they decode this message. This is the main motivation for employing RSMA in networks that suffer from interference. We emphasize that the instantaneous rate constraints are achievable under the assumption of full CSIT. However, when the CP knows only the channel's statistical properties, i.e., statistical CSIT, the achievability constraints \eqref{eq:e2.20} and \eqref{eq:e2.21} are not valid, as the SINR expressions become functions of random variables. In this case, we assort instead to the ergodic achievable rate for sending private and common messages, as discussed next. 
	\subsection{Achievable Ergodic Rates}
	\label{subsec:rec}
	In the statistical CSIT scenario, we consider the CP has only information about the statistics of channel states. 
	These assumptions are quite general and can model other inaccuracies in CSIT. 
	The assumption of statistical CSI knowledge, in particular, is reasonable because the path-loss information varies slowly and needs to be updated when the users' location changes only, which significantly reduces the communication overhead due to the CSIT acquisition process at the CP compared to the case in which the CP acquires full CSIT.
	In this case we consider sending the private and common streams of group $g$ at the ergodic rate \cite{goldsmith_2005}. The total ergodic rate of group $g$ is defined as $\EX_{\mathbf{h}}\left\lbrace R_{g}^{p} + R_{g}^{c} \right\rbrace \triangleq \bar{R}_{g}^{p} + \bar{R}_{g}^{c}$, where $\bar{R}_{g}^{p}$ is the ergodic rate to send the private stream and $\bar{R}_{g}^{c}$ is the ergodic rate to send the common stream of multicast group $g$. The achievability relations of the ergodic private and common rates are 
	\begin{align}
	\label{eq:e2.22}
	\bar{R}_{g}^{p} \leq B\EX_{\mathbf{h}}\left\lbrace  \log_2\left(1 + \gamma_{g,k}^p\right)\right\rbrace,  &\quad  \forall k \in \mathcal{G}_g, \forall g \in \mathcal{G}, \\
	\label{eq:e2.23}
	\bar{R}_{g}^{c} \leq B\EX_{\mathbf{h}}\left\lbrace \log_2\left(1 + \gamma_{g, k}^c\right)\right\rbrace,  &\quad \forall k \in \mathcal{M}_g, \forall g \in \mathcal{G}.
	\end{align}
	In the next section, we formulate the optimization problem under consideration, propose our solution approach. To that end, we define the WMMSE-based algorithm for optimal resource allocation in the considered system model.
\section{Problem Formulation and Proposed Algorithms}\label{sec:PF}
This paper considers the problem of system utility maximization in a cache-assisted C-RAN which adopts a multicast group-based transmission scheme combining RSMA and \emph{data-sharing}. The constraints of the problem consist of per-BS limited fronthaul capacity and transmit power constraints, as well as per-stream achievable rate constraints. The fronthaul link is needed to fetch data from the CP for multicast groups whose content is not locally cached at the BSs. The system utility under focus is the minimum achievable rate amongst all multicast groups. Thus, the resources in C-RAN are allocated to guarantee fairness among all multicast groups being served in C-RAN. The optimization problem to realize this target can be defined as follows
\begin{subequations}\label{eq:Opt0}
	\begin{align}
		\underset{\mathcal{V}_0}{\text{max}}\quad &\bar{R}  \\
		\text{s.t.} \quad & \bar{R} \leq \bar{R}_g^p + \bar{R}_g^c,  && \forall g \in \mathcal{G}, \label{eq:R1} \\
		&\sum_{{g \in \mathcal{G}_n^p}}\norm[\big]{\mathbf{w}_{n ,g}^{p}}_2^2  + \sum_{{g \in \mathcal{G}_n^c}}\norm[\big]{\mathbf{w}_{n ,g}^{c}}_2^2\leq P_{n}^{\text{max}},  &&\forall n \in \mathcal{N}, \label{eq:pmax1}\\
		&\sum_{{g \subseteq \mathcal{G}_n^p}}(1-c_{f_g, n})\bar{R}_{g}^{p} + \sum_{{g \subseteq \mathcal{G}_n^c}} (1-c_{f_g, n}) \bar{R}_{g}^{c} \leq C_n^{\text{max}}, &&\forall n \in \mathcal{N}, \label{eq:fn1}\\
		& \bar{R}_{g}^{p} \leq \mathbb{E}_{\mathbf{h}}\left\lbrace B\log_2(1+\gamma_{g,k}^p)\right\rbrace, &&\forall k \in \mathcal{G}_g, \forall g \in \mathcal{G}, \label{eq:Rp1} \\	
		& \bar{R}_{g}^{c} \leq \mathbb{E}_{\mathbf{h}}\left\lbrace B\log_2(1+\gamma_{g,k}^c)\right\rbrace, &&\forall k \in \mathcal{M}_g, \forall g \in \mathcal{G}, \label{eq:Rc1}
	\end{align}
\end{subequations}  
where $\mathcal{V}_0 \triangleq \left\lbrace \mathbf{w}_{g}^p, \mathbf{w}_{g}^c, \bar{R}_{g}^{p}, \bar{R}_{g}^{c}, \mathcal{G}_n^p, \mathcal{G}_n^c|\,\, \forall g \in \mathcal{G}, \forall n \in \mathcal{N}\right\rbrace$ is the set of optimization variables. Here we maximize the minimum achievable rate of all multicast groups $\bar{R}$ w.r.t. the private and common beamforming vectors, ergodic rates, and serving clusters. Constraint \eqref{eq:R1} represents the quality of service (QoS) ergodic rate minimum constraint, where $\bar{R}_g^p + \bar{R}_g^c$ is the ergodic achievable rate of multicast group $g$ and $\bar{R}$ is the minimum ergodic rate. \eqref{eq:pmax1} is the maximum transmit power constraint per BS, where $\mathbf{w}_{n,g}^p$ and $\mathbf{w}_{n,g}^c$ are the beamforming vectors from BS $n$ to serve the private and common message of multicast group $g$, respectively. $\mathcal{G}_n^p$ and $\mathcal{G}_n^c$ are the subsets of groups whose private and common messages are served by BS $n$, respectively. \eqref{eq:fn1} is the available fronthaul capacity constraint per BS. The achievable ergodic rates of the private and common streams are given by constraints \eqref{eq:Rp1} and \eqref{eq:Rc1}, respectively. \\
\indent With stochastic coordinated beamforming optimization, the same beamforming vectors, i.e., $\left\lbrace \mathbf{w}_{g}^p, \mathbf{w}_{g}^c|\,\, \forall g \in \mathcal{G}\right\rbrace$, are used for all transmit blocks in which the channel statistics remain constant. The same applies to the serving clusters and the allocated ergodic rates, which remain unchanged over several transmission blocks in which the channel statistics do not alter. Problem \eqref{eq:Opt0} is difficult and challenging to solve. In particular, the constraints \eqref{eq:Rp1} and \eqref{eq:Rc1} are functions of a stochastic quantity.
Hence, the achievable rates depend on the current realization of channel fading which is unknown at the CP and the expectations in \eqref{eq:Rp1} and \eqref{eq:Rc1} have no closed-form. Moreover, even when considering the deterministic version of this problem, i.e., when assuming full CSIT, the SINR expressions are non-convex functions of the design variables (i.e., the beamforming vectors) and the resulting problem is known to be NP-hard. Therefore, to tackle problem \eqref{eq:Opt0}, we require other optimization tools than those used for solving the counterpart deterministic problem. We use a three step approach: First, we develop a group-based clustering algorithm to predetermine the serving clusters, i.e., $\mathcal{G}_n^p$ and $\mathcal{G}_n^c$, in order to simplify problem \eqref{eq:Opt0}. Then, we use the SAA method to reformulate the expected value expressions. Afterwards, we make use of a WMMSE-rate relationship, to construct a block coordinate ascent algorithm for solving the resulting continuous, deterministic NLP.     
\subsection{Group-based Clustering}
The authors in \cite{DaiY14} propose a clustering algorithm for assigning a set of BSs to serve each user in the network. It is based on the path loss, which depends on the user location and typically varies on a slow time scale. However, the algorithm is not directly applicable to our problem since it was developed for unicast streams. Specifically, the clustering algorithm originally proposed in \cite{DaiY14} is a user-centric clustering approach not suited for the group-based transmission in this work. The authors in \cite{9217249} extended this algorithm to a more general case, as they considered common messages to be decoded by a subset of users. However, the private messages in \cite{9217249} are also decoded at a single user only. To further extend the algorithm from \cite{9217249}, in this subsection, we propose a group-based, rather than a user-based, clustering algorithm suited for the proposed multicast transmission scheme, which is more general since groups may consist of multiple users. 

First, we define $A_n^{\text{max}}$ as the maximum number of streams a BS $n$ can serve. Such a constant helps balancing the load, as it prevents BSs from being overloaded. Our group-based clustering approach starts by finding candidate clusters of BSs to serve each multicast group. These clusters are denoted as $\mathcal{N}_g^p$ and $\mathcal{N}_g^c$, containing all BSs that serve the private and common stream of multicast group $g$, respectively. We emphasize the multicast nature of our algorithm, as every BS in a serving cluster $\mathcal{N}_g^p$ or $\mathcal{N}_g^c$ has to have a good channel quality to all users in the group. Since the algorithm is based on path loss, we use a channel quality measure $q_{n,k}$ from BS $n$ to user $k$ that is inversely proportional to the path loss. As aforementioned, we can not use this measure directly in our algorithm. Hence, we define the collective channel quality of all users decoding the private and common message of group $g$ as $\tilde{q}_{n,g}^p$ and $\tilde{q}_{n,g}^c$, respectively. This collective quality is computed as 
\begin{equation}
	\tilde{q}_{n,g}^p = \frac{1}{|\mathcal{G}_g|}\sum_{j\in\mathcal{G}_g} q_{n,j}, \qquad \tilde{q}_{n,g}^c = \frac{1}{|\mathcal{M}_g|}\sum_{j\in\mathcal{M}_g} q_{n,j}.
\end{equation}
Thus, the candidate clusters of BSs serving a group $g$ are given by
\begin{equation}
	\mathcal{N}_g^p = \left\{ n \, \Big|\, \underset{m\in\mathcal{N}}{\text{max}}( \tilde{q}_{m,g}^p ) - \tilde{q}_{n,g}^p \leq \mu \right\}, \quad \mathcal{N}_g^c = \left\{ n \, \Big|\, \underset{m\in\mathcal{N}}{\text{max}}( \tilde{q}_{m,g}^c ) - \tilde{q}_{n,g}^c \leq \mu \right\}. \label{eq:ngo}
\end{equation}
The detailed steps of the group-based clustering procedure are listed in Algorithm \ref{alg_clust}. First step is determining the candidate clusters of BSs for all multicast groups according to \eqref{eq:ngo}. This is done for each group's private and common message, respectively. Further, sets $\mathcal{S}$ and $\mathcal{N}$ are initialized, where the former contains all private and common streams and the latter contains all BSs. The algorithm outputs the serving clusters $\mathcal{G}_n^p$ and $\mathcal{G}_n^c$, initialized as empty sets. The first \emph{for}-loop assigns BS $n$ from $\mathcal{N}_g^o$ with the best collective channel quality to serve the stream $s_g^o$. Thus, $g$ is added to the serving cluster of BS $n$, while $n$ is removed from the candidate cluster of stream $s_g^o$. If there is no candidate to serve $s_g^o$, the message is removed from $\mathcal{S}$. In the second \emph{for}-loop, for each BS $n$, we check if $n$ is overloaded, i.e., the number of assigned messages in $n$'s serving cluster exceeds $A_n^{\text{max}}$. If it is overloaded, $|\mathcal{G}_n^p|+|\mathcal{G}_n^c|-A_n^{\text{max}}$ groups with the weakest collective channel quality are removed from the serving cluster. BS $n$ is then removed from $\mathcal{N}$ and from all candidate clusters. This procedure is repeated until all BSs reach their maximum capacity of messages, i.e., $A_n^{\text{max}}$, or all messages $s_g^o$ run out of BS candidates. \\
\begin{algorithm}[h]
	\caption{Group-based Clustering Algorithm}
	\begin{spacing}{1}
		\begin{algorithmic}[1]
			\STATE Set $A_n^{\text{max}}$, determine $\mathcal{N}_g^p$ and $\mathcal{N}_g^c$ by \eqref{eq:ngo} for all multicast groups
			\STATE Initialize $\mathcal{S}=\{s_g^p,s_g^c|\forall g\in\mathcal{G}\}$, $\mathcal{N}=\{1,2,\cdots,N\}$, $\mathcal{G}_n^p = \emptyset$, $\mathcal{G}_n^c = \emptyset$ $\forall n\in\mathcal{N}$
			\STATE \textbf{While} $\mathcal{S}\neq\emptyset \cup \mathcal{N}\neq\emptyset$ \textbf{do} 
			\STATE $\;\;$\textbf{For} $g\in\mathcal{G}, o\in\{p,c\}$ 
			\STATE $\;\;$$\;\;$\textbf{If} $\mathcal{N}_g^o\neq\emptyset$ 
			\STATE $\;\;$$\;\;$$\;\;$The strongest BS from $\mathcal{N}_g^o$ is assigned to serve $s_g^o$, i.e., $\mathcal{G}_n^o = \mathcal{G}_n^o \cup \{g\}$, $\mathcal{N}_g^o = \mathcal{N}_g^o\backslash\{n\}$
			\STATE $\;\;$$\;\;$\textbf{Else} 
			\STATE $\;\;$$\;\;$$\;\;$$\mathcal{S}=\mathcal{S}\backslash\{s_g^o\}$
			\STATE $\;\;$$\;\;$\textbf{End} 
			\STATE $\;\;$\textbf{End} 
			\STATE $\;\;$\textbf{For} $n\in\mathcal{N}$ 
			\STATE $\;\;$$\;\;$\textbf{If} $|\mathcal{G}_n^p|+|\mathcal{G}_n^c|>A_n^{\text{max}}$, i.e., the BS $n$ is overloaded
			\STATE $\;\;$$\;\;$$\;\;$Remove $x$ weakest groups, where $x = |\mathcal{G}_n^p|+|\mathcal{G}_n^c|-A_n^{\text{max}}$
			\STATE $\;\;$$\;\;$$\;\;$$\mathcal{N}=\mathcal{N}\backslash\{n\}$, $\mathcal{N}_g^p=\mathcal{N}_g^p\backslash\{n\}$, $\mathcal{N}_g^c=\mathcal{N}_g^c\backslash\{n\}$ $\forall g\in\mathcal{G}$
			\STATE $\;\;$$\;\;$\textbf{End} 
			\STATE $\;\;$\textbf{End} 
			\STATE \textbf{End while}  
		\end{algorithmic}
	\end{spacing}
	\label{alg_clust}
\end{algorithm}%
\indent At this point, we utilize Algorithm \ref{alg_clust} to pre-compute the serving clusters. Therefore, we can now fix $\mathcal{G}_n^p$ and $\mathcal{G}_n^c$, which makes problem \eqref{eq:Opt0} more traceable from an optimization perspective. The relaxed problem with serving clusters already pre-computed can be stated as
\begin{subequations}\label{eq:Opt01}
	\begin{align}
		\underset{\mathcal{V}_1}{\text{max}}\quad &\bar{R}  \\
		\text{s.t.} \quad & \eqref{eq:R1}, \eqref{eq:pmax1}, \eqref{eq:fn1}, \eqref{eq:Rp1}, \eqref{eq:Rc1}.
	\end{align}
\end{subequations}  
Now, $\mathcal{V}_1 \triangleq \left\lbrace \mathbf{w}_{g}^p, \mathbf{w}_{g}^c, \bar{R}_{g}^{p}, \bar{R}_{g}^{c}|\,\, \forall g \in \mathcal{G}\right\rbrace$ is the new set of optimization variables. Next, we describe the SAA of problem \eqref{eq:Opt01}.
\subsection{SAA Reformulation}
A similar optimization framework which combines SAA and WMMSE-based optimization is first proposed in \cite{7555358}, where it was developed to solve a sum-rate maximization problem for the MISO-BC. 
However, in this work, we generalize the algorithm to solve a problem of maximizing the minimum rate for multigroup multicast transmission in cache-assisted C-RAN with RSMA. First, we approximate the expected value in the achievable ergodic rate constraints \eqref{eq:Rp1} and \eqref{eq:Rc1} with the sample average \cite{saa}. We use a Monte-Carlo sample size of $M$. With $\mathbf{h}$ as a random vector, we use $\mathbf{h}^m$ to denote the $m$-th independent realization of $\mathbf{h}$. Consequently, we denote the SINR's from \eqref{eq:e2.20} and \eqref{eq:e2.21} as $\gamma_{g,k}^p(m)$ and $\gamma_{g,k}^c(m)$, respectively. The corresponding SAA reformulation of problem \eqref{eq:Opt01} is defined as follows
\begin{subequations}\label{eq:Opt1}
	\begin{align}
		\underset{\mathcal{V}_1}{\text{max}}\quad &\bar{R}  \\
		\text{s.t.} \quad & \eqref{eq:R1}, \eqref{eq:pmax1}, \eqref{eq:fn1}, \nonumber \\
		& \bar{R}_{g}^{p} - \frac{B}{M} \sum_{m=1}^{M} \log_2(1+\gamma_{g,k}^p(m)) \leq 0, &&\forall k \in \mathcal{G}_g, \forall g \in \mathcal{G}, \label{eq:Rp2}\\	
		& \bar{R}_{g}^{c} - \frac{B}{M} \sum_{m=1}^{M} \log_2(1+\gamma_{g,k}^c(m)) \leq 0, &&\forall k \in \mathcal{M}_g, \forall g \in \mathcal{G}. \label{eq:Rc2}
	\end{align}
\end{subequations} 
At this point, problem \eqref{eq:Opt1} is still non-convex. An additional burden is the dependency on the sample size $M$. Nevertheless, problem \eqref{eq:Opt1} now comprises the deterministic constraints \eqref{eq:Rp2} and \eqref{eq:Rc2}, which denote the SAA of \eqref{eq:Rp1} and \eqref{eq:Rc1}, respectively. Note that in the asymptotic regime, the global optimal solutions of problem \eqref{eq:Opt01} and problem \eqref{eq:Opt1} converge. 
\begin{theorem}\label{th1}
	For $M\rightarrow\infty$ and in the asymptotic regime, the global optimal solution of problem \eqref{eq:Opt1} converges to the global optimal solution of problem \eqref{eq:Opt01}, which is of stochastic nature.
\end{theorem}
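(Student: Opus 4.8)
The plan is to recognize Theorem~\ref{th1} as an instance of the classical consistency theory for the sample average approximation (SAA) method~\cite{saa}, and to verify that problem~\eqref{eq:Opt01} satisfies the regularity conditions under which SAA is consistent. The argument splits into: (i) reducing to a compact decision set on which the per-sample rate functions are continuous; (ii) strengthening the pointwise strong law of large numbers to a \emph{uniform} law over that set; and (iii) invoking epi-convergence to pass from convergence of the random constraint functions to convergence of optimal values and of optimal solutions. For step (i), I would fix the finite, discrete data $\mathcal{G}_n^p,\mathcal{G}_n^c,\mathcal{M}_g,\Phi_k,\pi_k$, which do not depend on $M$; the per-BS power constraints~\eqref{eq:pmax1} force $\norm{\mathbf{w}_{n,g}^o}_2^2 \le P_n^{\max}$, hence $\norm{\mathbf{w}_g^o}_2^2 \le \sum_{n\in\mathcal{N}} P_n^{\max}$, so every beamformer lies in a fixed closed ball, and since $\gamma_{g,k}^o \le \norm{\mathbf{h}_k}_2^2\big(\sum_n P_n^{\max}\big)/\sigma^2$ the feasible ergodic rates $\bar R_g^p,\bar R_g^c,\bar R$ are bounded; hence the search may be restricted without loss of optimality to a common compact set $\mathcal{C}$. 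Moreover, for each fixed realization, $\mathbf{w}\mapsto \log_2\big(1+\gamma_{g,k}^o(m)\big)$ is continuous (a ratio whose denominator is bounded below by $\sigma^2>0$) and jointly measurable in the channel, i.e., a Carath\'eodory integrand dominated by the integrable envelope $\log_2\big(1+\kappa\norm{\mathbf{h}_k}_2^2\big)$ with $\kappa = \big(\sum_n P_n^{\max}\big)/\sigma^2$.

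For step (ii), since $\{\mathbf{h}^m\}$ are i.i.d.\ copies of $\mathbf{h}$, the strong law of large numbers gives, for each fixed $\mathbf{w}\in\mathcal{C}$, $\frac{1}{M}\sum_{m=1}^M \log_2(1+\gamma_{g,k}^o(m)) \to \EX_{\mathbf{h}}\{\log_2(1+\gamma_{g,k}^o)\}$ almost surely, the limit being finite by the envelope bound (here Gaussianity of $\mathbf{h}_k$ ensures $\EX_{\mathbf{h}}\{\log_2(1+\kappa\norm{\mathbf{h}_k}_2^2)\}<\infty$). Combining the Carath\'eodory property, the integrable envelope and compactness of $\mathcal{C}$, a uniform law of large numbers yields
\[
\sup_{\mathbf{w}\in\mathcal{C}}\Big| \tfrac{1}{M}\sum_{m=1}^M \log_2\big(1+\gamma_{g,k}^o(m)\big) - \EX_{\mathbf{h}}\{\log_2(1+\gamma_{g,k}^o)\} \Big| \to 0 \quad \text{almost surely as } M\to\infty,
\]
for every $g$, $k$ and $o\in\{p,c\}$; that is, the left-hand sides of \eqref{eq:Rp2}--\eqref{eq:Rc2} converge uniformly on $\mathcal{C}$ to those of \eqref{eq:Rp1}--\eqref{eq:Rc1}.

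For step (iii), the constraints \eqref{eq:R1}, \eqref{eq:pmax1}, \eqref{eq:fn1} are deterministic and common to both problems, so by step (ii) the feasible set of \eqref{eq:Opt1} converges (as sets) to that of \eqref{eq:Opt01}; since the objective $\bar R$ is the same linear, hence continuous, function on the compact set $\mathcal{C}$, the two problems epi-converge. Standard SAA consistency results~\cite{saa} then imply that the optimal value of \eqref{eq:Opt1} converges almost surely to that of \eqref{eq:Opt01}, and that every accumulation point of a sequence of global optimizers of \eqref{eq:Opt1} is a global optimizer of \eqref{eq:Opt01}, which is the assertion.

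The step I expect to be the main obstacle is step (ii): one cannot rely on the pointwise SLLN alone and must establish the uniform law, whose hypotheses (continuity of the per-sample rate in the beamformers, joint measurability, and above all an integrable dominating function) must be checked in this model — the Gaussian channel assumption being exactly what delivers the integrable envelope. A related subtlety is that the feasible region of \eqref{eq:Opt1} itself moves with $M$, so pointwise convergence of constraint functions is not by itself enough; the set-convergence/epi-convergence reasoning in step (iii) is what makes the conclusion about optimal \emph{solutions}, and not merely optimal values, rigorous.
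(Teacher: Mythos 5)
Your proposal is correct and follows essentially the same route as the paper's proof: restrict to the compact feasible set induced by the per-BS power constraints, establish uniform (not merely pointwise) convergence of the sample-average rate constraints via a uniform law of large numbers for Carath\'eodory integrands, and then invoke the standard SAA consistency results of~\cite{saa} to transfer convergence to the optimal values and optimal solution sets. The only notable difference is that you obtain the integrable envelope from the Gaussian channel statistics, whereas the paper simply assumes bounded channel realizations, and you spell out the epi-convergence step that the paper compresses into a citation of the SAA consistency theorem.
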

\begin{proof}
	Please refer to Appendix \ref{app1}.
\end{proof}
Problem \eqref{eq:Opt1} is still non-convex and the current formulation works for full CSIT only. In more details, constraints \eqref{eq:Rp2} and \eqref{eq:Rc2} are dependent on the explicit definition of the SINR variables $\gamma_{g, k}^p(m)$ and $\gamma_{i, k}^c(m)$, respectively. Since these definitions include the channel vector $\mathbf{h}_k$ for all users, full CSIT is required here. To cope with the statistical CSIT scenario, problem \eqref{eq:Opt1} has to be further reformulated. Therefore, in the next subsection, we will investigate the mean square error (MSE) and find receiver coefficients to minimize the MSE. To that end, we establish a WMMSE-rate relationship to alleviate the current achievable rate constraints.
\subsection{WMMSE Rate Relationship}
The MSE for user $k$ decoding the private message of it's respective multicast group $g$ can be defined as
\begin{equation}\label{eq:egkp0}
	e_{g,k}^p = \mathds{E}\left\{ \left| u_{g,k}^p \left(y_k - \sum_{j\in\Phi_k} \mathbf{h}_k^H \mathbf{w}_j^c s_j^c\right) - s_g^p \right|^2 \right\}.
\end{equation}
Here, $u_{g,k}^p$ is the linear receiver coefficient at $k$ decoding $g$'s private message. The middle term in the parenthesis represents the received signal after canceling all common messages $j$ decoded by user $k$, i.e., $j\in\Phi_k$. In a similar manner, we define the MSE for user $k$ decoding the common message of multicast group $i$ as
\begin{equation}\label{eq:egkc0}
	e_{i,k}^c = \mathds{E}\left\{ \left| u_{i,k}^c \left(y_k - \sum_{m\in\tilde{\Psi}_{i,k}} \mathbf{h}_k^H \mathbf{w}_m^c s_m^c\right) - s_i^c \right|^2 \right\},
\end{equation}
where $\tilde{\Psi}_{i,k} = \{ m\in\Phi_k | \pi_k(i) > \pi_k(m) \}$. Please note the difference between $\tilde{\Psi}_{i,k}$ and $\Psi_{i,k}$. The former $\tilde{\Psi}_{i,k}$ contains all multicast groups (indexed by $m$) whose common messages are decoded before decoding group $i$'s common message, i.e., groups having messages with lesser decoding order $\pi_{k}(i)>\pi_{k}(m)$. That is, the common messages of these groups are not included in the received signal since the are already decoded. In contrast, $\Psi_{i,k}$ includes groups whose common messages are yet to be decoded. Also, $u_{i,k}^c$ is $k$'s receiver coefficient decoding the common message of multicast group $i$. Utilizing the definition of $y_k$ from \eqref{eq:e2.15}, we can transform \eqref{eq:egkp0} and \eqref{eq:egkc0} into
\begin{align}
	e_{g,k}^p &= |u_{g,k}^p|^2 T_{g,k}^p - 2 \text{Re}\{ u_{g,k}^p \mathbf{h}_k^H \mathbf{w}_g^p \} + 1, \label{eq:egkp}\\
	e_{i,k}^c &= |u_{i,k}^c|^2 T_{i,k}^c - 2 \text{Re}\{ u_{i,k}^c \mathbf{h}_k^H \mathbf{w}_i^c \} + 1. \label{eq:egkc}
\end{align}
Receiver coefficients that minimize the MSE can be found computing $\frac{\partial e_{g,k}^p}{\partial u_{g,k}^p} = 0$ and $\frac{\partial e_{i,k}^c}{\partial u_{i,k}^c} = 0$. To this end, we obtain the minimum MSE (MMSE) receiver coefficients
\begin{align}
	u_{g,k,\text{mmse}}^p &= (\mathbf{w}_g^p)^H \mathbf{h}_k / T_{g,k}^p, \label{eq:upmmse}\\
	u_{i,k,\text{mmse}}^c &= (\mathbf{w}_i^c)^H \mathbf{h}_k / T_{i,k}^c. \label{eq:ucmmse}
\end{align}
Hence, to obtain the MMSE terms, \eqref{eq:upmmse} and \eqref{eq:ucmmse} are inserted into \eqref{eq:egkp} and \eqref{eq:egkc}, respectively. The MMSE at user $k$ decoding the private message of group $g$ is $e_{g,k,\text{mmse}}^p = I_{g,k}^p / T_{g,k}^p$. Equivalently, $e_{i,k,\text{mmse}}^c = I_{i,k}^c / T_{i,k}^c$ denotes the MMSE at $k$ decoding $i$'s common message. Next, we observe a specific relation between the achievable rate and the MMSE in a form that is amenable for the WMMSE-based algorithm.
\begin{lemma}
	The achievable rates from \eqref{eq:e2.20} and \eqref{eq:e2.21} can be expressed in another from including the receiver coefficients, the error terms, and error weight variables as follows
	\begin{align}
		\log_2(1+\gamma_{g,k}^p) &= \underset{u_{g,k}^p,\rho_{g,k}^p}{\text{max}} \left( \frac{\log(\rho_{g,k}^p)-\rho_{g,k}^p e_{g,k}^p + 1}{\log(2)} \right), \quad \forall k \in \mathcal{G}_g, \forall g \in \mathcal{G} \label{eq:mseratep},\\
		\log_2(1+\gamma_{g,k}^c) &= \underset{u_{g,k}^c,\rho_{g,k}^c}{\text{max}} \left( \frac{\log(\rho_{g,k}^c)-\rho_{g,k}^c e_{g,k}^c + 1}{\log(2)} \right), \quad \forall k \in \mathcal{M}_g, \forall g \in \mathcal{G} \label{eq:mseratec}.
	\end{align}
	Hereby we introduce $\rho_{g,k}^p$ and $\rho_{g,k}^c$ as weights for the MSE terms.
\end{lemma}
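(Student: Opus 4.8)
The plan is to prove each identity by a nested pointwise maximization over the newly introduced variables, with the beamformers held fixed throughout. Consider the private-stream relation \eqref{eq:mseratep} first. For fixed beamformers, the right-hand side objective $f(u_{g,k}^p,\rho_{g,k}^p) \defeq \big(\log(\rho_{g,k}^p) - \rho_{g,k}^p e_{g,k}^p + 1\big)/\log(2)$ depends on $u_{g,k}^p$ only through the error term $e_{g,k}^p$ of \eqref{eq:egkp}, which is a strictly convex quadratic in the scalar $u_{g,k}^p$ since its leading coefficient $T_{g,k}^p$ is strictly positive. Because the optimal weight $\rho_{g,k}^p$ will turn out to be positive and $\rho_{g,k}^p$ multiplies $-e_{g,k}^p$, the inner maximization of $f$ over $u_{g,k}^p$ is equivalent to minimizing $e_{g,k}^p$. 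Setting $\partial e_{g,k}^p/\partial u_{g,k}^p = 0$ recovers the MMSE receiver $u_{g,k,\text{mmse}}^p$ of \eqref{eq:upmmse} and the minimum value $e_{g,k,\text{mmse}}^p = I_{g,k}^p/T_{g,k}^p$; using $T_{g,k}^p = |\mathbf{h}_k^H\mathbf{w}_g^p|^2 + I_{g,k}^p$ together with the definition of $\gamma_{g,k}^p$ in \eqref{eq:e2.18}, this equals $1/(1+\gamma_{g,k}^p)$.

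Next I would substitute $e_{g,k}^p = e_{g,k,\text{mmse}}^p = 1/(1+\gamma_{g,k}^p)$ back into $f$, leaving the single-variable function $\rho_{g,k}^p \mapsto \big(\log(\rho_{g,k}^p) - \rho_{g,k}^p/(1+\gamma_{g,k}^p) + 1\big)/\log(2)$, which is concave on $\rho_{g,k}^p > 0$. Its stationarity condition gives the optimal weight $\rho_{g,k,\text{mmse}}^p = 1+\gamma_{g,k}^p = 1/e_{g,k,\text{mmse}}^p$, and back-substitution cancels the constant terms to produce exactly $\log_2(1+\gamma_{g,k}^p)$. Since the joint maximum of $f$ over $(u_{g,k}^p,\rho_{g,k}^p)$ is therefore attained at $(u_{g,k,\text{mmse}}^p,\rho_{g,k,\text{mmse}}^p)$ and equals $\log_2(1+\gamma_{g,k}^p)$, identity \eqref{eq:mseratep} follows. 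The interchange of the two maximizations is harmless: for each fixed $\rho_{g,k}^p>0$ the inner maximizer over $u_{g,k}^p$ is the MMSE receiver independently of $\rho_{g,k}^p$, and for each fixed $u_{g,k}^p$ the objective is coercive downward in $\rho_{g,k}^p$, so no optimum is lost.

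The common-stream relation \eqref{eq:mseratec} is obtained by the identical two-step argument applied to $e_{i,k}^c$ in \eqref{eq:egkc}: minimizing over the receiver yields $u_{i,k,\text{mmse}}^c$ of \eqref{eq:ucmmse} and $e_{i,k,\text{mmse}}^c = I_{i,k}^c/T_{i,k}^c$, the optimal weight is $\rho_{i,k,\text{mmse}}^c = 1/e_{i,k,\text{mmse}}^c$, and back-substitution gives $\log_2(1+\gamma_{i,k}^c)$. I do not expect any genuine analytical difficulty here; the only step requiring care is the SIC bookkeeping. One must verify that the interference-plus-noise power $I_{i,k}^c$ entering $T_{i,k}^c$ in \eqref{eq:e2.17} — which excludes the already-decoded common messages indexed by $\tilde{\Psi}_{i,k}$ but keeps those in $\Psi_{i,k}$ that are not yet decoded — is precisely the denominator of $\gamma_{i,k}^c$ in \eqref{eq:e2.19}, so that once more $e_{i,k,\text{mmse}}^c = 1/(1+\gamma_{i,k}^c)$. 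Thus the main (and mild) obstacle is notational: keeping the index sets $\Phi_k$, $\Omega_k$, $\Psi_{i,k}$ and $\tilde{\Psi}_{i,k}$ consistent between the MSE expressions and the SINR expressions, rather than anything in the scalar calculus.
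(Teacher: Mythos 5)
Your proposal is correct and follows essentially the same route as the paper's proof: set the derivative with respect to the receiver coefficient to zero to recover the MMSE receiver $u_{g,k,\text{mmse}}^o$ with $e_{g,k,\text{mmse}}^o = 1/(1+\gamma_{g,k}^o)$, then optimize the weight to get $\rho^* = 1/e_{g,k,\text{mmse}}^o$, and back-substitute to obtain $\log_2(1+\gamma_{g,k}^o)$. Your added remarks on convexity in $u$, concavity in $\rho>0$, and the harmlessness of the nested maximization only make explicit what the paper leaves implicit, so there is no substantive difference.
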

\begin{proof}
	The partial derivative of the right hand side of \eqref{eq:mseratep} w.r.t. $u_{g,k}^p$ is set to zero. We obtain the optimal receiver coefficient at $k$ decoding $g$'s common message as $(u_{g,k}^p)^* = u_{g,k,\text{mmse}}^p$, which is the optimal MSE receiver coefficient from \eqref{eq:upmmse}. The same procedure for the weight coefficient $\rho_{g,k}^p$ results in $(\rho_{g,k}^p)^* = 1/{e_{g,k,\text{mmse}}^p}$. Inserting the optimal coefficients into \eqref{eq:mseratep} results in the equivalence of the right hand and left hand side. To prove the equivalence in \eqref{eq:mseratec}, the same procedure can be repeated. This completes the proof.
\end{proof}
At this point, we obtain a formulation for the achievable rate, which helps in formulating the WMMSE-based algorithm. Note that for full CSIT, we could now have started discussing the final algorithm. However, as we consider statistical CSIT, the achievable rate relations in \eqref{eq:e2.20} and \eqref{eq:e2.21} become non-deterministic functions. Therefore, we next elaborate on the expressions \eqref{eq:mseratep} and \eqref{eq:mseratec} in the context of statistical CSIT. \\
\indent Due to the lack of of channel coefficient knowledge, we can not use the rate expressions \eqref{eq:e2.20} and \eqref{eq:e2.21}. Hence, we define the achievability relations of the ergodic private and common rates in \eqref{eq:e2.22} and \eqref{eq:e2.23}, respectively. In a similar manner, we reformulate \eqref{eq:mseratep} and \eqref{eq:mseratec} as follows
\begin{align}
	\mathds{E}_\mathbf{h}\{\log_2(1+\gamma_{g,k}^p)\} &= \frac{1}{\log(2)}\mathds{E}_\mathbf{h}\bigg\{\underset{u_{g,k}^p,\rho_{g,k}^p}{\text{max}} \left( \log(\rho_{g,k}^p)-\rho_{g,k}^p e_{g,k}^p + 1 \right)\bigg\}, \; \forall k \in \mathcal{G}_g, \forall g \in \mathcal{G}, \label{eq:wmmseratep}\\
	\mathds{E}_\mathbf{h}\{\log_2(1+\gamma_{g,k}^c)\} &= \frac{1}{\log(2)}\mathds{E}_\mathbf{h}\bigg\{\underset{u_{g,k}^c,\rho_{g,k}^c}{\text{max}} \left( \log(\rho_{g,k}^c)-\rho_{g,k}^c e_{g,k}^c + 1 \right) \bigg\} , \; \forall k \in \mathcal{M}_g, \forall g \in \mathcal{G} \label{eq:wmmseratec}.
\end{align}
In the next subsection, we will describe the following reformulation steps and, to that end, state the WMMSE-based algorithm.
\subsection{WMMSE-Based Algorithm}
To substitute the WMMSE rate relationship into the constraints \eqref{eq:Rp2} and \eqref{eq:Rc2}, we apply SAA to the formulations from \eqref{eq:wmmseratep} and \eqref{eq:wmmseratec}, respectively. Thus, we can define the reformulated optimization problem as
\begin{subequations}\label{eq:Opt2}
	\begin{align}
		\underset{\mathcal{V}_2}{\text{max}}\quad &\bar{R}  \\
		\text{s.t.} \quad & \eqref{eq:R1}, \eqref{eq:pmax1}, \eqref{eq:fn1} \nonumber \\
		& \bar{R}_{g}^{p} - \frac{B}{M} \sum_{m=1}^{M} \underset{u_{g,k}^p(m),\rho_{g,k}^p(m)}{\text{max}} \left( \frac{\log(\rho_{g,k}^p(m))-\rho_{g,k}^p(m) e_{g,k}^p(m) + 1}{\log(2)} \right) \leq 0, \nonumber\\
		& \qquad\qquad\qquad\qquad\qquad\qquad\qquad\qquad\qquad\qquad\qquad\qquad \forall k \in \mathcal{G}_g, \forall g \in \mathcal{G}, \label{eq:Rp3}\\	
		& \bar{R}_{g}^{c} - \frac{B}{M} \sum_{m=1}^{M} \underset{u_{g,k}^c(m),\rho_{g,k}^c(m)}{\text{max}} \left( \frac{\log(\rho_{g,k}^c(m))-\rho_{g,k}^c(m) e_{g,k}^c(m) + 1}{\log(2)} \right) \leq 0, \nonumber\\
		& \qquad\qquad\qquad\qquad\qquad\qquad\qquad\qquad\qquad\qquad\qquad\qquad \forall k \in \mathcal{M}_g, \forall g \in \mathcal{G}. \label{eq:Rc3}
	\end{align}
\end{subequations}  
Here $\mathcal{V}_2 \triangleq \left\lbrace \mathbf{w}_{g}^p, \mathbf{w}_{g}^c, \bar{R}_{g}^{p}, \bar{R}_{g}^{c}, {\bm \rho}_g^p,{\bm \rho}_g^c,\mathbf{u}_g^p,\mathbf{u}_g^c|\,\, \forall g \in \mathcal{G} \right\rbrace$ is the set of optimization variables. We explicitly introduce the MSE weights ${\bm \rho}_g^p = {\rm vec}\left(\left\{ \rho_{g,k}^p(m) | \,\, \forall k \in \mathcal{G}_g,\forall g \in \mathcal{G}, {\forall m: \, 1\leq m \leq M} \right\}\right)$ and ${\bm \rho}_g^c = {\rm vec}\left(\left\{ \rho_{g,k}^c(m) |\,\, \forall k \in \mathcal{M}_g, \forall g \in \mathcal{G}, \forall m: \, 1\leq m \leq M \right\}\right)$. Additionally, we introduce the receiver coefficients $\mathbf{u}_g^p = {\rm vec}\left(\left\{ u_{g,k}^p(m) | \,\,\forall k \in \mathcal{G}_g, \forall g \in \mathcal{G}, \forall m: \, 1\leq m \leq M \right\}\right)$ as well as $\mathbf{u}_g^c = {\rm vec}\left(\left\{ u_{g,k}^c(m) |\,\, \forall k \in \mathcal{M}_g, \forall g \in \mathcal{G}, \forall m: \, 1\leq m \leq M \right\}\right)$.
The constraints \eqref{eq:Rp3} and \eqref{eq:Rc3} explicitly show the channel realization dependence of these variables.\\
\indent Please note that problem \eqref{eq:Opt2} is convex if we fix the newly introduced variables $\{ {\bm \rho}_g^p,{\bm \rho}_g^c,\mathbf{u}_g^p,\mathbf{u}_g^c \}$. Also, these variables can be computed for fixed $\{ \mathbf{w}_{g}^p, \mathbf{w}_{g}^c, \bar{R}_{g}^{p}, \bar{R}_{g}^{c} \}$ using \eqref{eq:upmmse}, \eqref{eq:ucmmse}, as well as $\rho_{g,k}^p = 1/e_{g,k,\text{mmse}}^p$ and $\rho_{i,k}^c = 1/e_{i,k,\text{mmse}}^c$. Another advantage of the WMMSE formulation is that we can find a formulation for the optimization problem, which does not depend on the respective channel realization. To achieve such a reformulation, we first define the following set of variables for the WMMSE-based algorithm:
\begin{align}
	\overline{t}_{g,k}^o &= \frac{1}{M} \sum_{m=1}^{M} \rho_{g,k}^o(m) \left|u_{g,k}^{o}(m)\right|^2, &&\; o \in \{p,c\}, \label{eq:t}\\
	\overline{z}_{g,k}^o &= \frac{1}{M} \sum_{m=1}^{M} \left( 1 - \rho_{g,k}^o(m) + \log( \rho_{g,k}^o(m) ) \right), &&\; o \in \{p,c\}, \label{eq:z}\\
	\overline{\mathbf{f}}_{g,k}^o &= \frac{1}{M} \sum_{m=1}^{M} \rho_{g,k}^o(m) \mathbf{h}_k^m (u_{g,k}^o(m))^H, &&\; o \in \{p,c\}, \label{eq:f} \\
	\overline{\mathbf{Y}}_{g,k}^o &= \frac{1}{M} \sum_{m=1}^{M} \left( \rho_{g,k}^o(m)\left|u_{g,k}^{o}(m)\right|^2 \mathbf{h}_k^m (\mathbf{h}_k^m)^H \right), &&\; o \in \{p,c\}. \label{eq:Y}
\end{align}
These auxiliary variables are fixed during optimization, but they are updated after each iteration. The resulting optimization problem utilizing the variables from \eqref{eq:t} - \eqref{eq:Y} is then given by
\begin{subequations}\label{eq:Opt3}
	\begin{align}
		\underset{\mathcal{V}_1}{\text{max}}\quad &\bar{R}  \\
		\text{s.t.} \quad & \eqref{eq:R1}, \eqref{eq:pmax1}, \eqref{eq:fn1}, \nonumber \\
		& \sum_{j\in\mathcal{G}} (\mathbf{w}_j^p)^H \overline{\mathbf{Y}}_{g,k}^p \mathbf{w}_j^p + \sum_{l\in\Omega_k} (\mathbf{w}_l^c)^H \overline{\mathbf{Y}}_{g,k}^p \mathbf{w}_l^c - 2 \text{Re}\{ (\overline{\mathbf{f}}_{g,k}^p)^H \mathbf{w}_g^p \} \nonumber\\ 
		& \qquad \qquad + \frac{\log(2)}{B} \bar{R}_{g}^{p} + \sigma^2 \overline{t}_{g,k}^p - \overline{z}_{g,k}^p \leq 0, \qquad\qquad\qquad\qquad \forall k \in \mathcal{G}_g, \forall g \in \mathcal{G}, \\	
		& \sum_{j\in\mathcal{G}} (\mathbf{w}_j^p)^H \overline{\mathbf{Y}}_{g,k}^c \mathbf{w}_j^p + \sum_{l\in\Omega_k} (\mathbf{w}_l^c)^H \overline{\mathbf{Y}}_{g,k}^c \mathbf{w}_l^c + \sum_{m\in\Psi_{g,k}} (\mathbf{w}_m^c)^H \overline{\mathbf{Y}}_{g,k}^c \mathbf{w}_m^c \nonumber\\ 
		& \qquad \qquad  + (\mathbf{w}_g^c)^H \overline{\mathbf{Y}}_{g,k}^c \mathbf{w}_g^c - 2 \text{Re}\{ (\overline{\mathbf{f}}_{g,k}^c)^H \mathbf{w}_g^c \} + \frac{\log(2)}{B} \bar{R}_{g}^{c} \nonumber \\
		& \qquad \qquad + \sigma^2 \overline{t}_{g,k}^c - \overline{z}_{g,k}^c \leq 0, \qquad\qquad\qquad\qquad\qquad\qquad\quad \forall k \in \mathcal{M}_g, \forall g \in \mathcal{G}. 
	\end{align}
\end{subequations}  
Note that the set of optimization variables is now reduced to the set $\mathcal{V}_1$. Problem \eqref{eq:Opt3} is a convex optimization problem and thus amenable for solving using standard tools (e.g., CVX \cite{cvx}). 
An advantage of problem \eqref{eq:Opt3} is the independence of the Monte-Carlo sample size $M$, which pronounces the computational complexity advantage of this solution compared to state-of-the-art algorithms. Typically, $M$ has to be very large in order for the SAA to be an accurate approximation. In order to find a stationary solution to \eqref{eq:Opt3}, Algorithm \ref{alg} lists detailed steps for the WMMSE-based optimization to maximize the minimum achievable rate of all mutlicast groups.
\begin{algorithm}[h]
	\caption{WMMSE-based Minimum Achievable Rate Maximization}
	\begin{spacing}{1.1}
	\begin{algorithmic}[1]
		\STATE Initialize $\mathbf{w}_g^p$ and $\mathbf{w}_g^c$ to feasible values $\forall g\in\mathcal{G}$. Generate $M$ channel vector samples as $\{\mathbf{h}^1,\cdots,\mathbf{h}^M\}$
		\STATE \textbf{Repeat:} until convergence 
		\STATE Update the set of auxiliary variables $\{ \overline{t}_{g,k}^o, \overline{z}_{g,k}^o, \overline{\mathbf{f}}_{g,k}^o, \overline{\mathbf{Y}}_{g,k}^o | o\in\{ p,c \} \}$
		\STATE Solve optimization problem \eqref{eq:Opt3}
		\STATE \textbf{End}  
	\end{algorithmic}
	\end{spacing}
	\label{alg}
\end{algorithm}%
\begin{theorem} \label{KKT}
	The solution $\Psi_\nu = \{ \mathbf{w}_{g}^o, \bar{R}_{g}^{o}, {\bm \rho}_g^o,\mathbf{u}_g^o,|\,\, \forall g \in \mathcal{G}, \forall o \in\{p,c\} \} $ is generated by Algorithm \ref{alg} in iteration $\nu$. The sequence $\{ \Psi_\nu \}_{\nu=1}^{\infty}$ converges to a KKT point of problem \eqref{eq:Opt2}.
\end{theorem}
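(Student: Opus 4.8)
The plan is to read Algorithm~\ref{alg} as a successive convex approximation (SCA) / inner-approximation method applied to the SAA problem \eqref{eq:Opt2}, and then to invoke the standard convergence theory for such methods. The starting observation is that the inner maximizations over $(u_{g,k}^o(m),\rho_{g,k}^o(m))$ appearing in the constraints \eqref{eq:Rp3}--\eqref{eq:Rc3} are, by Lemma~1, solved in closed form by the MMSE receivers \eqref{eq:upmmse}, \eqref{eq:ucmmse} and the weights $\rho_{g,k}^o=1/e_{g,k,\text{mmse}}^o$. Freezing these auxiliary variables at the values induced by the \emph{current} beamformers turns the constraint functions of \eqref{eq:Opt2} into the constraints of \eqref{eq:Opt3}, which are jointly convex in $\{\mathbf{w}_g^o,\bar R_g^o\}$ because, with $u$ and $\rho$ held fixed and $\rho\ge 0$, each $\rho\,e_{g,k}^o$ is a convex quadratic in the beamformers. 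So Step~3--4 of Algorithm~\ref{alg} amounts to: linearize-at-the-current-iterate, then solve exactly the resulting convex program \eqref{eq:Opt3}.

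\textbf{Key steps, in order.} First I would verify the three properties that make these surrogate constraints admissible. (i) \emph{Global upper bound}: replacing the maximum over $(u,\rho)$ by any particular choice cannot increase it, hence subtracting it cannot decrease the constraint function, so every surrogate dominates the corresponding true SAA constraint of \eqref{eq:Opt2} everywhere; in particular, feasibility for \eqref{eq:Opt3} implies feasibility for \eqref{eq:Opt2}. (ii) \emph{Tightness at the linearization point}: by Lemma~1 the MMSE choice attains the maximum, so the surrogate equals the true constraint at the current iterate, and therefore the current iterate is feasible for the next instance of \eqref{eq:Opt3}. (iii) \emph{Gradient consistency}: since the inner maximizer is the unique MMSE point and the inner objective is smooth, the envelope theorem gives that surrogate and true constraint share the same gradient in $\{\mathbf{w},\bar R\}$ at the linearization point (differentiation commuting with the finite Monte-Carlo average). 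Combining (i)--(ii), the optimal value $\bar R^{(\nu)}$ of \eqref{eq:Opt3} is nondecreasing in $\nu$, while the per-BS power constraints \eqref{eq:pmax1} bound the beamformers, hence the SINRs and the rates; so $\{\bar R^{(\nu)}\}$ is monotone and bounded, thus convergent. Compactness of the full iterate sequence $\{\Psi_\nu\}$ follows as well: the beamformers are bounded by power, the rates are bounded, and the MMSE quantities are bounded since $1\le\rho_{g,k}^o=T_{g,k}^o/I_{g,k}^o\le T_{g,k}^o/\sigma^2$ is controlled by the power budget and $|u_{g,k}^o|$ likewise. Hence limit points exist.

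\textbf{From a limit point to KKT.} Let $\Psi^\star$ be any limit point. By continuity of the MMSE update, at $\Psi^\star$ the auxiliary variables are exactly the MMSE receivers and weights of the limiting beamformers, so the stationarity conditions of \eqref{eq:Opt2} in $\{\bm\rho,\mathbf{u}\}$ hold automatically. For the remaining block, $\Psi^\star$ is the optimum of the convex subproblem \eqref{eq:Opt3} built at $\Psi^\star$ itself (feasibility by (ii); optimality by passing to the limit along the convergent subsequence using that $\bar R^{(\nu)}\to\bar R^\star$). Writing the KKT system of that convex program and substituting, via (ii)--(iii), the surrogate constraint values and gradients by those of the true constraints of \eqref{eq:Opt2}, the same Lagrange multipliers certify that $\Psi^\star$ satisfies the KKT conditions of \eqref{eq:Opt2}. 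A constraint qualification is required: scaling down all beamformers and lowering all ergodic rates yields a strictly feasible point of \eqref{eq:Opt2} and of each subproblem \eqref{eq:Opt3}, so Slater's condition holds and the subproblem multipliers are well defined and bounded. Since $\{\Psi_\nu\}$ stays in a compact set and every limit point is a KKT point, the sequence converges to a KKT point of \eqref{eq:Opt2}.

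\textbf{Main obstacle.} The delicate part is property (iii) combined with the fact that the beamformer block of \eqref{eq:Opt3} need not have a unique optimizer: a direct appeal to classical two-block Gauss--Seidel (Grippo--Sciandrone) would demand such uniqueness, so the clean route is to view the method as SCA on \eqref{eq:Opt2} with the auxiliary variables eliminated, for which only (i)--(iii)---not uniqueness---are needed. Establishing (iii) rigorously means checking that the inner maxima in \eqref{eq:mseratep}--\eqref{eq:mseratec} are attained at a single interior point where the value function is continuously differentiable, and that this differentiability survives the finite average in \eqref{eq:Rp3}--\eqref{eq:Rc3}; both are routine given the explicit MMSE formulas but constitute the technical heart of the argument.
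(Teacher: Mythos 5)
Your proposal is sound, but it takes a genuinely different route from the paper. The paper proves Theorem~\ref{KKT} as a classical WMMSE-style block coordinate ascent argument (in the spirit of \cite{6190228}): it eliminates the rate variables, rewrites \eqref{eq:Opt2} as an equivalent problem \eqref{eq:OptP} in $\{\mathbf{w}_g^o,\bm\rho_g^o,\mathbf{u}_g^o\}$, shows the objective sequence is monotone and bounded, extracts a cluster point via compactness, verifies block-wise optimality of the limit in each of the two blocks (beamformers versus MMSE receivers/weights), and then asserts that checking the KKT system of \eqref{eq:OptP} at that point, together with the equivalence of \eqref{eq:OptP} and \eqref{eq:Opt2}, yields the claim. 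You instead eliminate the auxiliary variables and read Algorithm~\ref{alg} as successive convex restriction (inner approximation) of \eqref{eq:Opt2}: your properties (i)--(iii) — the frozen-$(u,\rho)$ constraints dominate the true SAA constraints, are tight at the linearization point by Lemma~1, and share gradients there by Danskin/envelope — are exactly the Marks--Wright conditions, so a fixed point of the subproblem map inherits the subproblem multipliers as KKT multipliers for \eqref{eq:Opt2}, with Slater supplying the constraint qualification. Your route buys two things the paper's argument glosses over: it keeps the rate variables, so the fronthaul and power constraints never involve the auxiliary variables and feasibility is trivially preserved when $(u,\rho)$ are refreshed (in the paper's reformulation \eqref{eq:OptP} the coupled constraint could in principle be violated by the MMSE update, and the sum-objective form of \eqref{eq:OptP} does not literally match the max-min problem whose epigraph \eqref{eq:Opt2} is), and it makes the "check the KKT conditions" step explicit via gradient consistency rather than leaving it as an assertion; what it costs is the need to verify uniqueness/interiority of the inner maximizers and the envelope-theorem differentiability, which you correctly flag and which do hold here since $\sigma^2>0$. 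One shared weakness: like the paper, you only really establish that every limit point of the bounded iterate sequence is a KKT point; the final jump to convergence of the whole sequence is not justified by compactness alone, so that sentence should be softened to subsequential convergence (as, in fairness, should the paper's).
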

\begin{proof}
	Please refer to Appendix \ref{app2}.
\end{proof}
\subsection{Complexity Analysis}
In this subsection, we are interested in the overall computational complexity of Algorithm \ref{alg}. This overall complexity mainly depends on two factors: $1)$ the complexity of problem \eqref{eq:Opt3}; $2)$ the solution accuracy of the iterative procedure. Update step $3$ of Algorithm \ref{alg}, which essentially updates the beamforming vectors and the allocated rates, solves the quadratic constrained convex optimization problem (QCCP) \eqref{eq:Opt3}. In order to state the worst-case computational complexity for Algorithm \ref{alg}, we first need to determine the number of constraints and variables. There are $d_1 = 2N+G+\sum_{g\in\mathcal{G}}|\mathcal{G}_g|+\sum_{g\in\mathcal{G}}|\mathcal{M}_g|$ constraints, with $|\mathcal{M}_g|$ the cardinality of $\mathcal{M}_g$, and $d_2 = (2G(NL+1))$ variables. The complexity metric of problem \eqref{eq:Opt3} becomes $\mathcal{O}((d_1 d_2^2 + d_2^3)\sqrt{d_1}\log(1/\epsilon))$, for given solution accuracy $\epsilon$. 
\section{Numerical Simulations} \label{sec:NS}
In this section, numerical simulations illustrating the performance of our proposed scheme, compared to TIN and the SCM-RSMA scheme, are conducted. We consider a cache-enabled C-RAN over a square area of [-400 400] $\times$ [-400 400] $\text{m}^2$. Unless specified otherwise, we set the maximum transmit power to $P_n^{\text{max}} = 28$ dBm and the total number of files in the library to $F=50$. The noise spectral density is assumed to be $-168$ dBm/Hz. Also, we set the Monte-Carlo sample size to $M=1000$ and the parameter for Algorithm~\ref{alg_clust} to $A_n^{\text{max}}=8,\forall n\in\mathcal{N}$. The channel model used for our simulations is given as follows \cite{6786060}:
\begin{equation}
	\mathbf{h}_{n,k} = 	D_{n,k}\mathbf{e}_{n,k},
\end{equation}
with
\begin{equation}
	D_{n,k} = 10^{-\text{PL}_{n,k}/20}\sqrt{g_{n,k}s_{n,k}}.
\end{equation}
Here $g_{n,k}$ is the shadowing coefficient, $s_{n,k}$ is the antenna gain, and $\text{PL}_{n,k}$ defines the path-loss between BS $n$ and user $k$ given by
\begin{equation}
	\text{PL}_{n,k} = 148.1 + 37.6\log_{10}(d_{n,k}),
\end{equation}
where $d_{b,k}$ is the distance in km between BS $b$ and user $k$. The small-fading coefficients $\mathbf{e}_{n,k}\in\mathbb{C}^{L\times 1}$ are modeled as $\mathbf{e}_{n,k}\sim \mathbf{\mathcal{CN}}\left(0, \mathbf{I}_{L}\right)$. This paper distinguishes between two cases: $1$) full CSIT, where the CP has perfect channel knowledge, i.e., the vector $\mathbf{h}_{n,k}$ is known; $2$) statistical CSIT, where the CP can only perfectly estimate the large-fading coefficient $D_{n,k}$, whereas the small-fading coefficient $\mathbf{e}_{n,k}$ is unknown. \\
\indent As baseline schemes, we deploy two different approaches contrary to the proposed RS-CMD scheme. As the most basic scheme, TIN is used allowing only private streams to be transmitted. More specific, the data for each multicast group is encoded into a private stream and then transmitted by the respective serving BSs. As every user in a multicast group only decodes their group's private message, the rest of the streams are treated as noise. The second baseline scheme, referred to as SCM-RSMA, is based on the concept of \cite{7555358}. In \cite{7555358}, the authors propose an RSMA scheme using one common stream, which is then decoded by all users in the network. More specific, there are $G+1$ streams to be transmitted, one for every multicast group, i.e., the private streams, and one mutual stream to be decoded by all groups, i.e., the common stream.
\subsection{MMF Rate vs. Fronhaul Capacity}\label{mmrvfc}
\begin{figure}
	\centering
	\begin{subfigure}[t]{0.49\textwidth}
		\centering
		\includegraphics[width=1\linewidth]{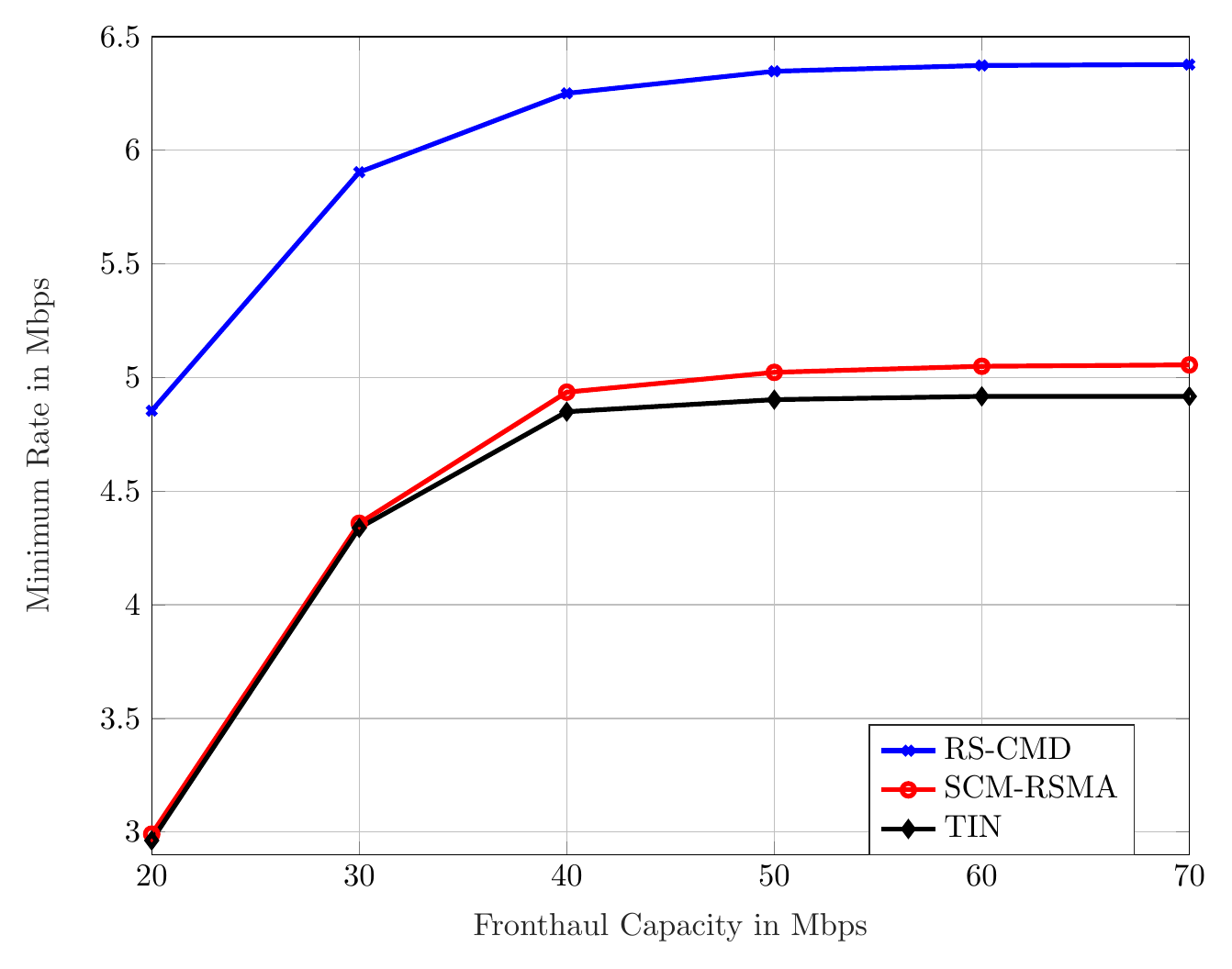}
		\caption{MMF rate vs. fronthaul capacity for statistical CSIT.}
		\label{y_rate_x_fthl_im}
	\end{subfigure}\hfill
	\begin{subfigure}[t]{0.49\textwidth}
		\centering
		\includegraphics[width=1\linewidth]{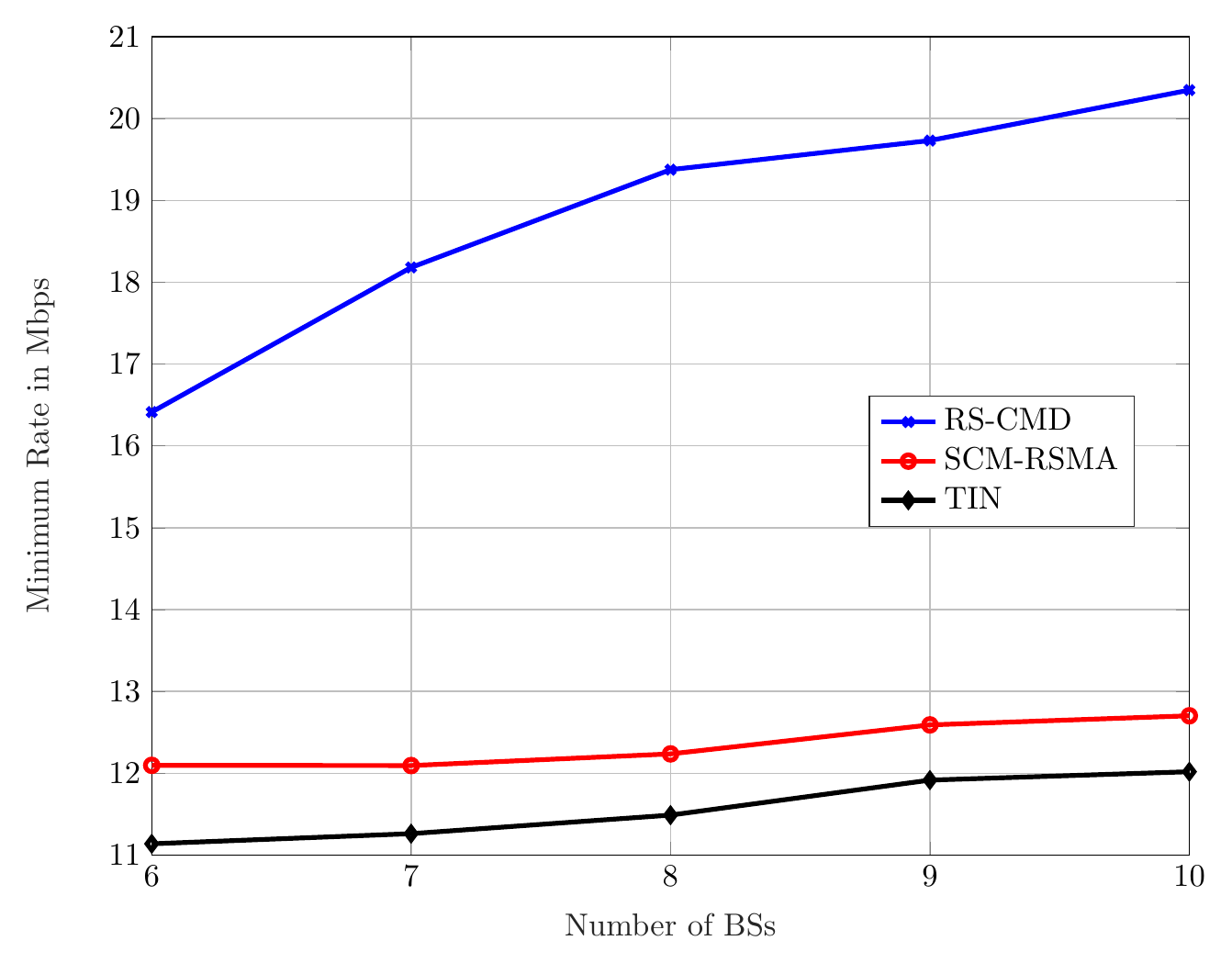}
		\caption{MMF rate vs. number of BSs for full CSIT.}
		\label{y_rate_x_nuBS_im}
	\end{subfigure}
	\caption{MMF rate as a function of different parameters. We consider the proposed RS-CMD scheme, the reference SCM-RSMA, and TIN.}
	\label{y_rate_x_diff_im}
\end{figure}
In this setup, we consider a C-RAN consisting of $7$ BSs and $15$ users that are randomly placed throughout the network with uniform distribution. The BSs are equipped with $L=2$ transmit antennas. The cache memory size is considered to be $5$ files at each BS ($10$\% of the total files). \\
\indent Fig.~\ref{y_rate_x_fthl_im} shows the performance of all studied schemes as a function of the fronthaul capacity in Mbps for statistical CSIT. 
A general observation from Fig.~\ref{y_rate_x_fthl_im} is that our proposed RSMA scheme attains significant gain in terms of the achievable minimum rate, especially in the low fronthaul capacity regime. Specifically, in Fig.~\ref{y_rate_x_fthl_im}, the gain compared to conventional TIN is 63.85\% when the fronthaul capacity per-BS is equal to 20 Mbps and 29.69\% when the fronthaul capacity is increased to 70 Mbps. Through mitigating the interference more efficiently using RS-CMD, our proposed scheme achieves higher minimum rates compared to both reference schemes. Moreover, when the fronthaul capacity is $30$ Mbps and higher, SCM-RSMA performs better than the conventional TIN scheme. These results clearly highlight the need for utilizing RSMA schemes in order to achieve a higher minimum rate. That is, RSMA assists the network to better ensure fairness among all users, especially using the proposed RS-CMD scheme. 
\subsection{Minimum Achievable Rate Under Full CSIT}
For the following set of simulations, we design a network with $15$ users, a fronthaul capacity of $80$ Mbps, $5$ files cache size, and $L=2$ antennas at each BS. The number of BSs controlled by the CP is the parameter under study, we vary the total BSs from $6$ to $10$. \\
\indent The minimum achievable rate is plotted as a function of the number of users in Fig.~\ref{y_rate_x_nuBS_im}. Consistent with the results from subsection \ref{mmrvfc}, our proposed scheme outperforms TIN and SCM-RSMA significantly. This gain is especially substantial when the number of BSs is high.\\
\indent In constrast to the statistical CSIT case, in Fig.~\ref{y_rate_x_fthl_im}, we observe greater gains of SCM-RSMA over TIN, as well as greater gains of our proposed RS-CMD scheme over both baselines. Full knowledge of CSIT obviously results in better rates and thus better performance of all schemes.
\subsection{Impact of Caching on the MMF Rate}
In another set of simulations, we focus on the impact of different cache sizes on the achievable minimum rate using the RS-CMD scheme in the statistical CSIT scenario. Except for the different cache sizes, the network under consideration does not differ from the subsection \ref{mmrvfc}. \\
\indent In Fig.~\ref{y_rate_x_fthl_cach_im}, we fix the cache size to different values, i.e., $\{0,5,\cdots,25\}$ files, and plot the achievable minimum rate for different fronthaul capacities. Matching with previous results, the minimum achievable rate increases with fronthaul capacity. First, in Fig.~\ref{y_rate_x_fthl_cach_im}, we observe the gap between the graphs, i.e., the caching gain. Especially in low fronthaul regimes, the MMF rate is better when bigger caches are utilized, since the fronthaul capacity is limited and thus transmitting data from the CP to the BSs is a sensible operation. A cache hit reduces the congestion on the fronthaul links. \\
\indent Another major observation in Fig.~\ref{y_rate_x_fthl_cach_im} is the decrease in caching gain with increasing fronthaul capacity, i.e., the gap between the graphs is decreasing. When a high quantity of fronthaul capacity is available, the importance of using caching lowers. As the fronthaul links are barely congested, there is no need to alleviate the congestion with cache hits.
Since low fronthaul regimes are more common in practice, the results herein pronounce the role of caching in the proposed system model as a means to increase the MMF rate among the multicast groups. \\
\indent In Fig.~\ref{y_rate_x_fthl_cach_comp_im}, we show the MMF rate of RS-CMD, SCM-RSMA, and TIN as a function of the fronthaul capacity. A comparison is conducted for three different cache sizes, i.e., $\{0,20,40\}$ files per BS ($\{0,10,20\}$\% of the total files). Generally, RS-CMD outperforms TIN and SCM-RSMA. Interestingly, the performance gain of RS-CMD over the benchmarks changes with different cache sizes. At $25$ Mbps fronthaul capacity, RS-CMD gains $18.25$\%, $33.79$\%, and $67.08$\% over SCM-RSMA with $20$, $10$, and $0$ files in the cache, respectively. In contrast, at $40$ Mbps fronthaul capacity, RS-CMD gains $16.28$\%, $16.99$\%, and $25.84$\% over SCM-RSMA with $20$, $10$, and $0$ files in the cache, respectively. RS-CMD significantly outperforms TIN and SCM-RSMA at low fronthaul regimes with small cache sizes. 
\begin{figure}
	\centering
	\begin{subfigure}[t]{0.49\textwidth}
		\centering
		\includegraphics[width=1\linewidth]{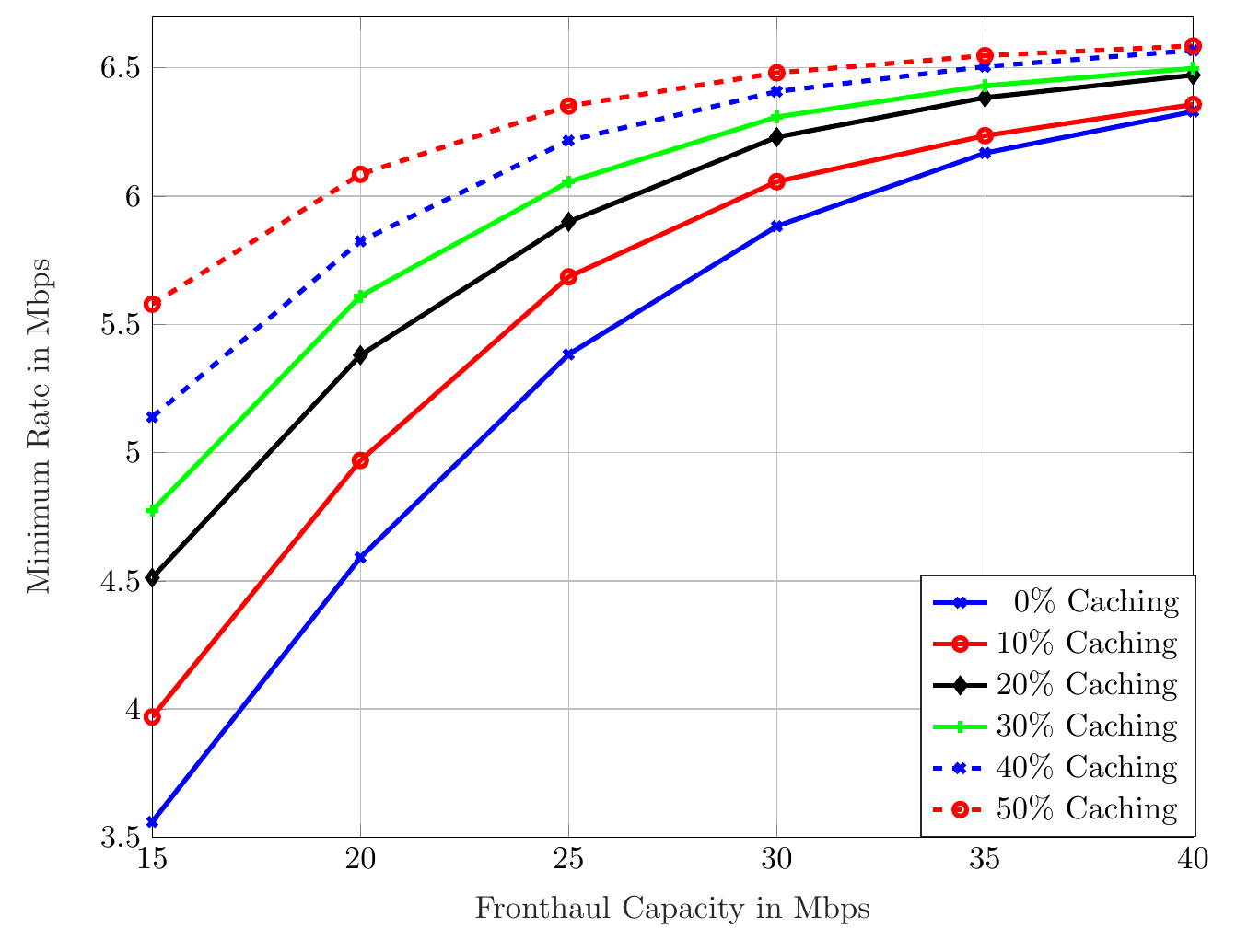}
		\caption{Considering RS-CMD only.}
		\label{y_rate_x_fthl_cach_im}
	\end{subfigure}\hfill
	\begin{subfigure}[t]{0.49\textwidth}
		\centering
		\includegraphics[width=1\linewidth]{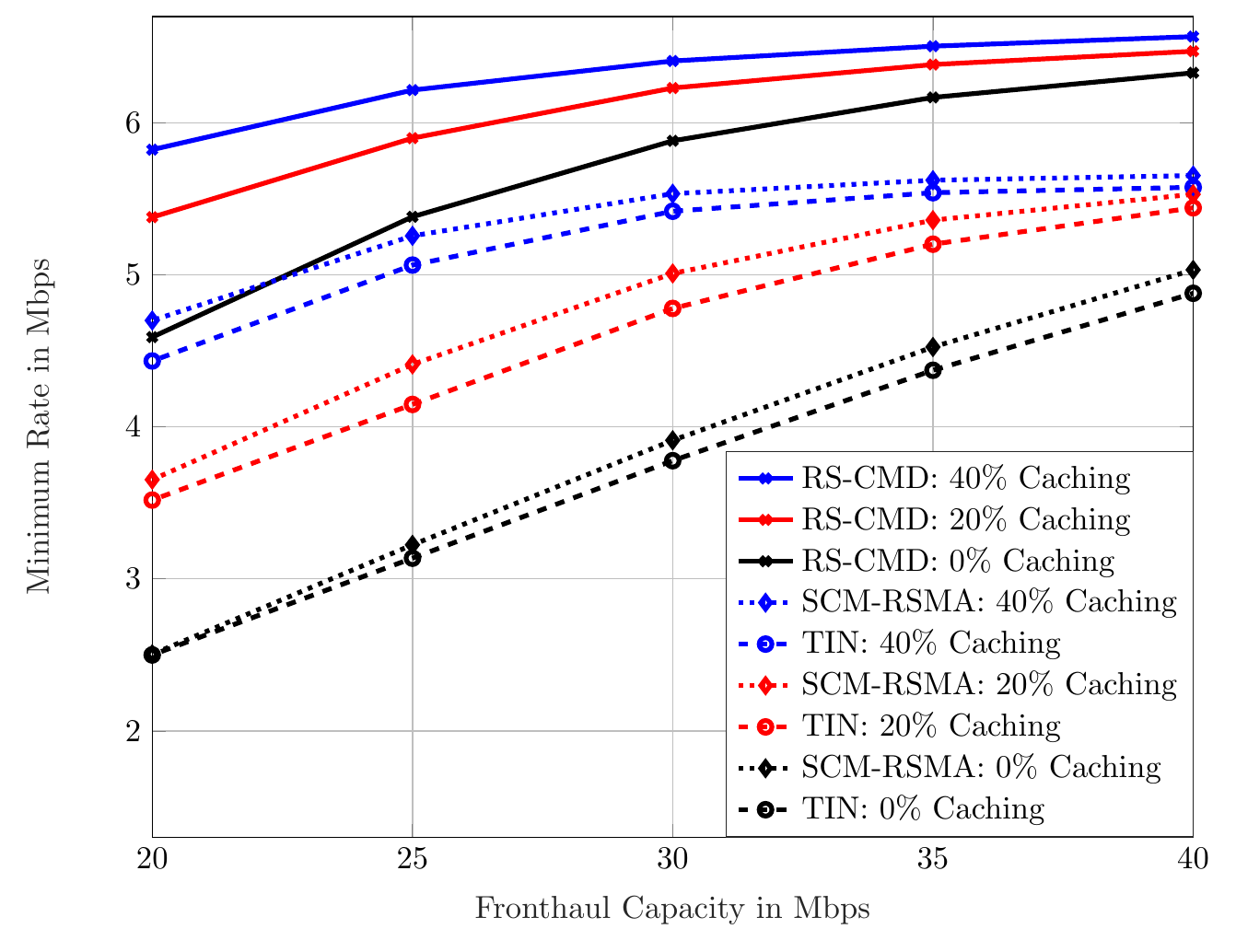}
		\caption{Comparison between the schemes.}
		\label{y_rate_x_fthl_cach_comp_im}
	\end{subfigure}
	\caption{MMF rate as a function of the fronthaul capacity for the statistical CSIT case. We consider the impact of different cache sizes.}
	\label{y_rate_x_fthl_para_im}
\end{figure}

\subsection{Influence of Transmit Antennas on the MMF Rate}
In order to investigate the influence of different amounts of transmit antennas at the BSs on the MMF rate among all multicast groups, we conduct the next set of simulations. We consider the same network as in subsection \ref{mmrvfc}. \\ 
\indent The minimum achievable rate as a function of the fronthaul capacity for $L \in \{1,\cdots,5\}$ antennas is shown in Fig.~\ref{y_rate_x_fthl_numAnt_im}. As expected, we observe higher rates when utilizing more transmit antennas at the BSs. Interestingly, especially in high fronthaul capacity regimes, the minimum rate of the single-antenna system is considerably lower than the MMF rate in all other networks. This highlights the need for utilizing multiple antennas in the proposed cache-aided C-RAN with the multigroup multicast transmission scheme.

Another observation in Fig.~\ref{y_rate_x_fthl_numAnt_im} is the similarity of the minimum achievable rate associated with $4$ versus $5$ antennas per BS systems. Throughout all fronthaul capacities, the rate of the $5$ antenna system achieves only minor gains over the $4$ antenna system. Therefore, considering the herein defined network parameters, using more than $4$ antennas to increase the minimum achievable rate is not a viable option. This is partially caused by power constraints at the BSs and partially by the fronthaul capacity constraints.
\begin{figure}
	\centering
	\includegraphics[width=.49\linewidth]{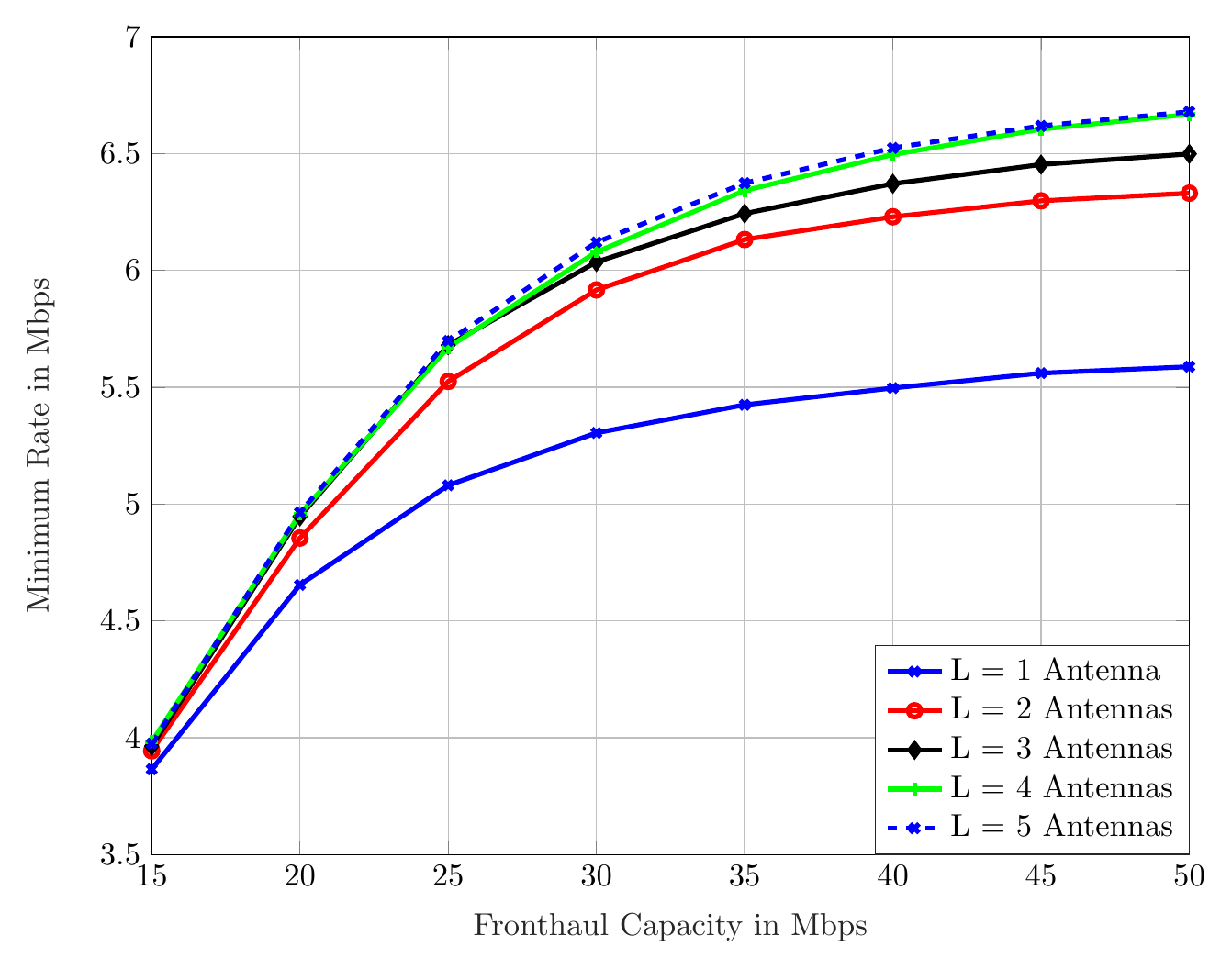}
	\caption{MMF rate as a function of the fronthaul capacity for the statistical CSIT case. We consider different numbers of transmit antennas at the BSs.}
	\label{y_rate_x_fthl_numAnt_im}
\end{figure}

\subsection{Comparison to a Simpler Transmission Scheme}\label{mul_uni}
Prior to introducing the next set of simulations, we present another state-of-the-art transmission scheme to be used as a performance reference. In this work, we group users into multicast groups based on their requested content, i.e., all users in a group receive the same private and common message. Many related works, e.g., \cite{7925639,7499119}, consider a special case of this scheme. More specific, following the notations of this paper, each group comprises only one user. This is referred to as RS-CMD with $G= K$, while our proposed scheme is referred to as RS-CMD with $G\leq K$. \\
\indent Now, comparing our proposed multigroup multicast transmission scheme to the conventional scheme, in Fig.~\ref{y_rate_x_numU_unimul_v2_im}, we show the minimum achievable rate of all groups as a function of the number of users. The fronthaul capacity is fixed to $30$ Mbps, we assume $7$ BSs throughout the network and a cache memory size of $10$ files ($20$\% of the total files). \\
\indent Generally, in Fig.~\ref{y_rate_x_numU_unimul_v2_im}, RS-CMD with $G\leq K$ outperforms RS-CMD with $G= K$. The performance gap is small when the number of users is low, since multicast groups are more unlikely in this regime as the user requests are not coinciding with high probability. Note that RS-CMD with $G\leq K$ and with $G= K$ are equal if no users request the same content. \\
\indent As expected, the MMF rate among all groups decreases with the number of users for both schemes. Since the BSs serve additional users under the same power and fronthaul constraints, integrating more users into the same network reduces the MMF rate amongst the groups. Under RS-CMD with $G\leq K$, these additional users may be included in an existing multicast group if they request the same content. Therefore, the proposed scheme suffers less MMF rate decrease as compared to RS-CMD with $G=K$, i.e., see Table~\ref{y_rate_x_numU_unimul_v2_tb}.
\begin{figure}
	\centering
	\begin{subfigure}[t]{0.49\textwidth}
		\centering
		\includegraphics[width=1\linewidth,valign=t]{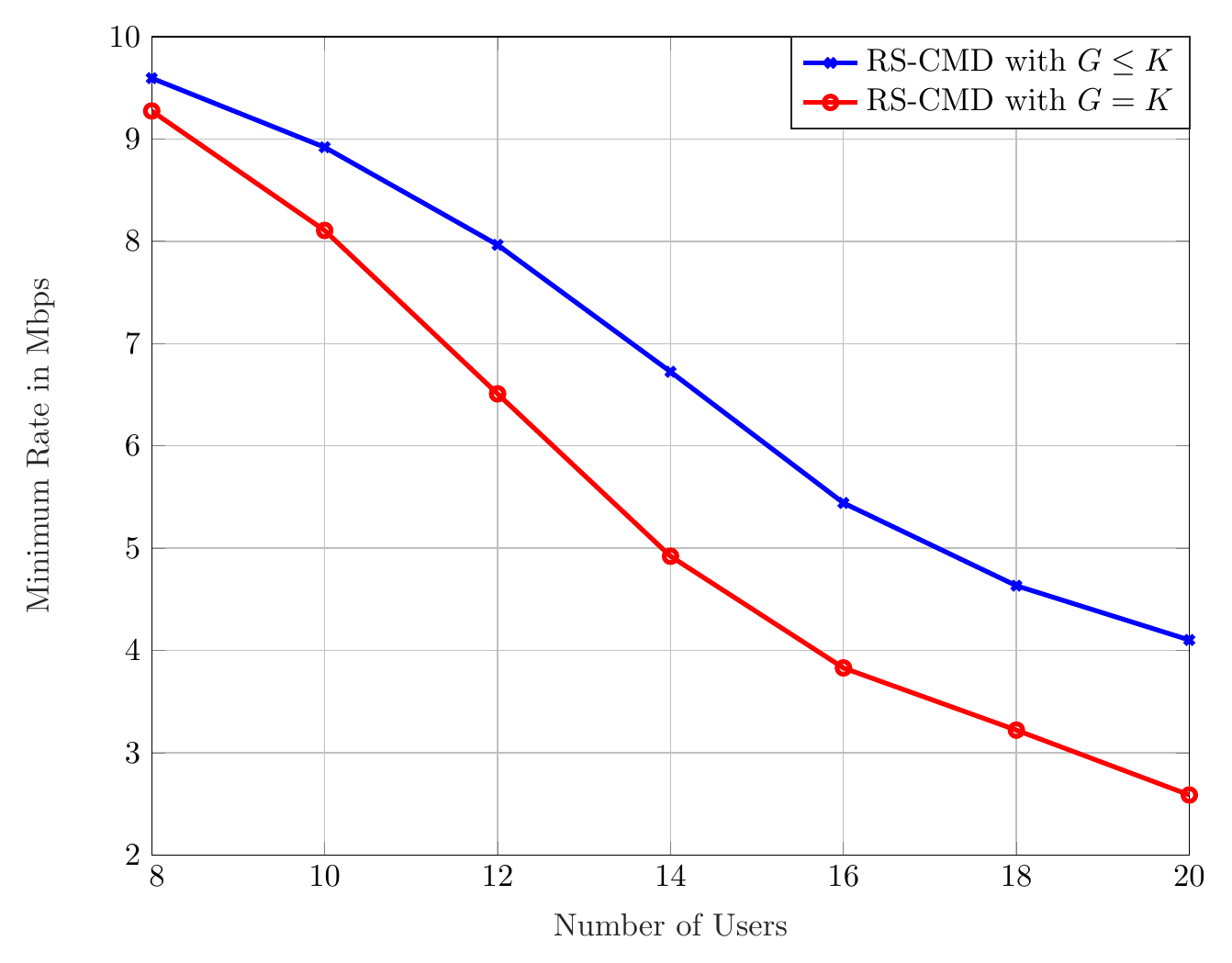}
		\caption{MMF rate of RS-CMD with $G\leq K$ and $G=K$ as a function of the number of uses.}
		\label{y_rate_x_numU_unimul_v2_im}
	\end{subfigure}\hfill
	\begin{subfigure}[t]{0.49\textwidth}
			\centering
			\adjustbox{valign=t}{\begin{tabular}{c|c}
				Number of Users & Multicast Gain in \% \\
				\hline
				$8$ & $3.45$ \\
				\hline
				$10$ & $10.03$ \\
				\hline
				$12$ & $22.37$ \\
				\hline
				$14$ & $36.64$ \\
				\hline
				$16$ & $42.10$ \\
				\hline
				$18$ & $43.83$ \\
				\hline
				$20$ & $58.55$ \\
			\end{tabular}}
		\vspace*{2.3cm}
		\caption{MMF-rate gain of RS-CMD with $G\leq K$ over $G=K$ in percent.}
		\label{y_rate_x_numU_unimul_v2_tb}
	\end{subfigure}
	\caption{MMF rate versus the number of uses for the statistical CSIT case. We consider two different transmission schemes, i.e., RS-CMD with $G\leq K$ and $G=K$.}
	\label{y_rate_x_numU}
\end{figure}
\subsection{Convergence Behaviour}
\begin{figure}
	\centering
	\includegraphics[width=.49\linewidth]{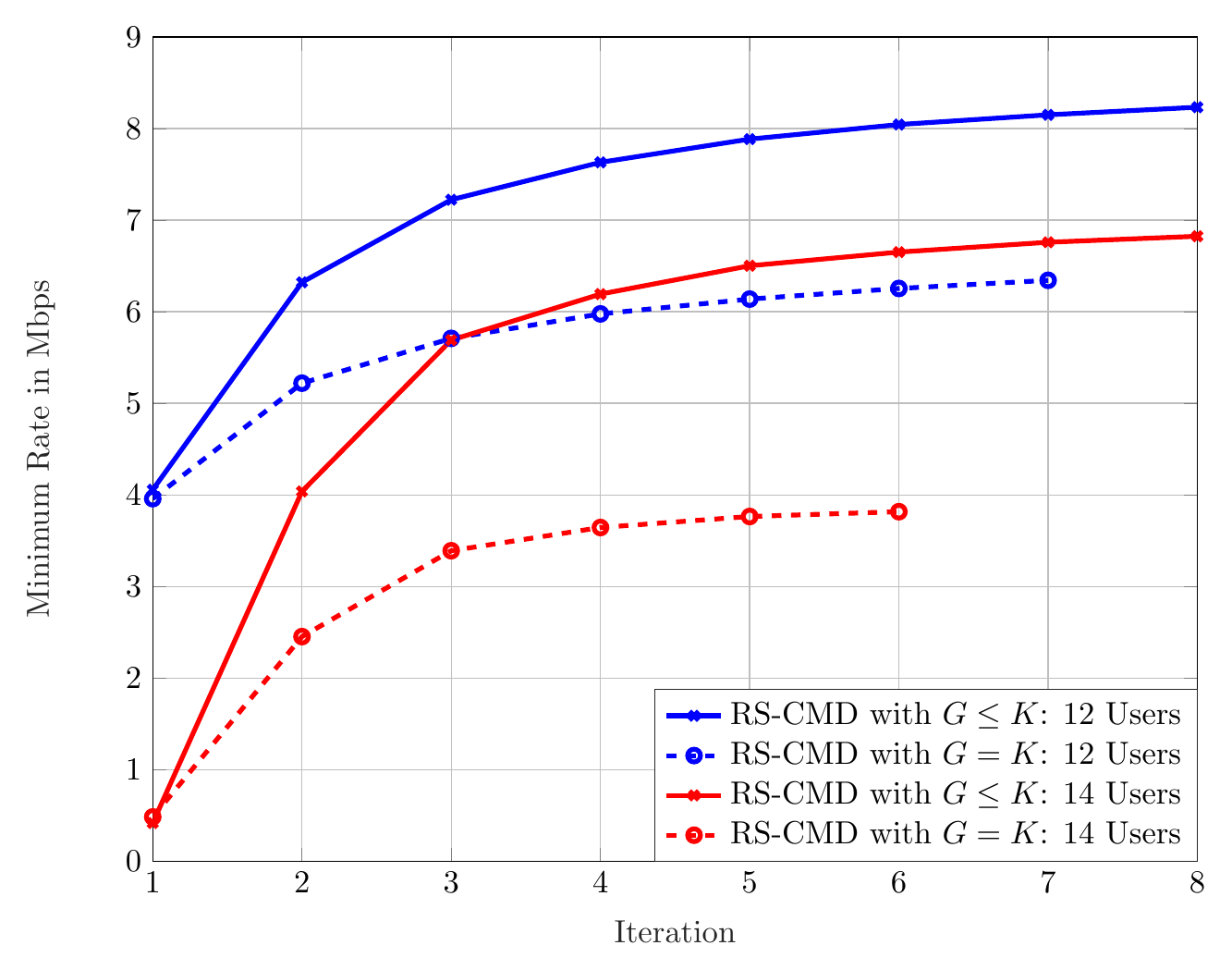}
	\caption{Convergence behavior of the proposed algorithm under statistical CSIT. Two different number of users are considered for RS-CMD with $G\leq K$ and $G=K$.} 
	\label{y_rate_x_conv_im}
\end{figure}
To now illustrate the convergence behavior of our proposed Algorithm \ref{alg}, we consider the same simulation parameters as in section \ref{mul_uni}. Fig.~\ref{y_rate_x_conv_im} shows the minimum achievable rate for RS-CMD with $G\leq K$ and $G=K$ when the network consists of $12$ and $14$ users in every step of Algorithm \ref{alg} until convergence. Since the overall required iterations until convergence is relatively low, the results in Fig.~\ref{y_rate_x_conv_im} exhibit the feature of high execution speed of our proposed algorithm. Therefore, we observe good convergence behavior for the considered system model which further highlights another feature of the proposed algorithm.

\section{Conclusion} \label{sec:C}
Handling the sophisticated requirements of B5G wireless communication networks is a difficult task. To tackle this task, this paper proposed a promising RSMA-assisted cache-enabled C-RAN under a multigroup multicast scenario. The proposed framework considers a more practical scenario where the CP and BSs operate under statistical CSIT. We aimed at designing the precoders and BS clustering with the aim of maximizing the minimum achievable rate amongst the multicast groups. The optimization problem is tackled by a novel multicast group-based clustering approach, as well as an effective algorithm based on SAA and WMMSE. Extensive numerical simulations showed the significant max-min rate gain of the proposed RSMA framework over the existing benchmarks in cache-aided C-RAN with various fronthaul capacities and cache sizes. Therefore, we conclude that RSMA has a great potential to enhance user fairness and spectral efficiency in cache-aided C-RAN.
\appendices
\section{Proof of Theorem \ref{th1}} \label{app1}
Let the noise variance be non-zero, i.e., $\sigma^2 > 0$, the transmit power be finite, i.e., $P_n^{\text{max}} < \infty$, and the channel realizations be bounded. Then, the SINR expressions \eqref{eq:e2.18} and \eqref{eq:e2.19} are finite, i.e., $\gamma_{g,k}^p < \infty,$ $\forall g\in\mathcal{G},$ and $\gamma_{g,k}^c < \infty,$ $\forall k \in \mathcal{M}_g, \forall g \in \mathcal{G}$. The limit of average spectral efficiency exists, when the sample size $M$ tends to infinity. Let $\mathcal{W}$ be the feasible set of beamforming vectors determined by constraint \eqref{eq:pmax1}. All previous assumptions make sure, that $\mathcal{W}$ is compact and not empty. Now, considering the ergodicity assumption and the law of large numbers, we make the following statement \cite[Theorem 7.48]{saa}
\begin{equation}
	\underset{\mathbf{w}\in\mathcal{W}}{\text{sup}} \left| \frac{1}{M} \log_2(1+\gamma_{g,k}^o) - \mathbb{E}_\mathbf{h}\{ \log_2(1+\gamma_{g,k}^o) \} \right| \rightarrow 0, \quad \text{as } M \rightarrow \infty, \quad o\in\{p,c\}.
\end{equation}
Thus, given unlimited sample size, the SAA estimate of the rates converges to the ergodic rate uniformly on the compact set $\mathcal{W}$ with probability one. Therefore, the set of optimal solutions of problem \eqref{eq:Opt1} converges uniformly to the optimal solution set of problem \eqref{eq:Opt01} \cite[Theorem 5.3]{saa}. This completes the proof.
\section{Proof of Theorem \ref{KKT}} \label{app2}
The steps in this proof can be seen analog to \cite[Theorem 2]{6190228}. Let $G_g(\mathbf{w}_{g}^p, \mathbf{w}_{g}^c, {\bm \rho}_g^p,{\bm \rho}_g^c,\mathbf{u}_g^p,\mathbf{u}_g^c) = G_g^p(\mathbf{w}_{g}^p, {\bm \rho}_g^p,\mathbf{u}_g^p) + G_g^c(\mathbf{w}_{g}^c, {\bm \rho}_g^c,\mathbf{u}_g^c)$, where we define the following two functions
\begin{align}
	G_g^p(\mathbf{w}_{g}^p, {\bm \rho}_g^p,\mathbf{u}_g^p) &= \frac{B}{M\log(2)} \sum_{m=1}^{M} \underset{u_{g,k}^p(m),\rho_{g,k}^p(m)}{\text{max}} \left( \log(\rho_{g,k}^p(m))-\rho_{g,k}^p(m) e_{g,k}^p(m) + 1 \right),\\
	G_g^c(\mathbf{w}_{g}^c, {\bm \rho}_g^c,\mathbf{u}_g^c) &= \frac{B}{M\log(2)} \sum_{m=1}^{M} \underset{k\in\mathcal{M}_g}{\text{min}} \left( \underset{u_{g,k}^c(m),\rho_{g,k}^c(m)}{\text{max}} \left( \log(\rho_{g,k}^c(m))-\rho_{g,k}^c(m) e_{g,k}^c(m) + 1 \right) \right).
\end{align}
Using these definitions, we formulate following problem
\begin{subequations}\label{eq:OptP}
	\begin{align}
		\underset{\mathcal{V}_3}{\text{max}}\quad &\sum_{g\in\mathcal{G}} G_g(\mathbf{w}_{g}^p, \mathbf{w}_{g}^c, {\bm \rho}_g^p,{\bm \rho}_g^c,\mathbf{u}_g^p,\mathbf{u}_g^c)  \\
		\text{s.t.} \quad & \eqref{eq:pmax1}, \nonumber \\
		& \sum_{{g \in \mathcal{G}_n^p}}G_g^p(\mathbf{w}_g^p,\mathbf{u}_g^p,{\bm \rho}_g^p)  + \sum_{{g \in \mathcal{G}_n^c}}G_g^c(\mathbf{w}_g^c,\mathbf{u}_g^c,{\bm \rho}_g^c) \leq F_{n}, &&\forall n \in \mathcal{N}.
	\end{align}
\end{subequations}  
Here $\mathcal{V}_3 \triangleq \left\lbrace \mathbf{w}_{g}^p, \mathbf{w}_{g}^c, {\bm \rho}_g^p,{\bm \rho}_g^c,\mathbf{u}_g^p,\mathbf{u}_g^c|\,\, \forall g \in \mathcal{G}\right\rbrace$ is the set of optimization variables. Note, that problem \eqref{eq:Opt2} is the epigraph form of problem \eqref{eq:OptP}. According to \cite[Chapter 4]{convexOpt}, problems \eqref{eq:Opt2} and \eqref{eq:OptP} and their respective optimal solutions are equivalent. Therefore, we are able to use problem \eqref{eq:OptP} as an equivalent formulation of problem \eqref{eq:Opt2} throughout this proof for simplicity reasons. \\
\indent As Algorithm \ref{alg} is a block coordinate ascent algorithm operating iteratively, in iteration $\nu$, we solve the following convex optimization problem
\begin{subequations}\label{eq:OptPnu}
	\begin{align}
		\underset{\mathcal{V}_4}{\text{max}}\quad &\sum_{g\in\mathcal{G}} G_g(\mathbf{w}_{g}^p, \mathbf{w}_{g}^c, ({\bm \rho}_g^p)^\nu,({\bm \rho}_g^c)^\nu,(\mathbf{u}_g^p)^\nu,(\mathbf{u}_g^c)^\nu)  \\
		\text{s.t.} \quad & \eqref{eq:pmax1}, \nonumber \\
		& \sum_{{g \in \mathcal{G}_n^p}}G_g^p(\mathbf{w}_g^p,(\mathbf{u}_g^p)^\nu,({\bm \rho}_g^p)^\nu)  + \sum_{{g \in \mathcal{G}_n^c}}G_g^c(\mathbf{w}_g^c,(\mathbf{u}_g^c)^\nu,({\bm \rho}_g^c)^\nu) \leq F_{n}, &&\forall n \in \mathcal{N}, \label{eq:fnP}
	\end{align}
\end{subequations}  
where $\mathcal{V}_4 \triangleq \left\lbrace \mathbf{w}_{g}^p, \mathbf{w}_{g}^c|\,\, \forall g \in \mathcal{G}\right\rbrace$.
Note that the fixed values, i.e., $({\bm \rho}_g^p)^\nu,({\bm \rho}_g^c)^\nu,(\mathbf{u}_g^p)^\nu,(\mathbf{u}_g^c)^\nu$, are computed by $(\rho_{g,k}^p)^\nu = 1/e_{g,k,\text{mmse}}^p$, $(\rho_{i,k}^c)^\nu = 1/e_{i,k,\text{mmse}}^c$, \eqref{eq:upmmse}, and \eqref{eq:ucmmse}, using the optimal beamforming vectors computed in iteration $(\nu-1)$. We define the objective function of problem \eqref{eq:OptP} as $Q(\mathbf{w}_{g}^p, \mathbf{w}_{g}^c, {\bm \rho}_g^p,{\bm \rho}_g^c,\mathbf{u}_g^p,\mathbf{u}_g^c)$. Since $Q$ is a concave function and the achievable ergodic rates are bounded, the sequence $\{Q((\mathbf{w}_{g}^p)^\nu, (\mathbf{w}_{g}^c)^\nu, ({\bm \rho}_g^p)^\nu,({\bm \rho}_g^c)^\nu,(\mathbf{u}_g^p)^\nu,(\mathbf{u}_g^c)^\nu)\}_{\nu=0}^{\infty}$ increases monotonically after each iteration and converges to the limit point $\bar{Q}$. Since the feasible set defined by constraints \eqref{eq:pmax1} and \eqref{eq:fnP} is compact, $\{(\mathbf{w}_{g}^p)^\nu, (\mathbf{w}_{g}^c)^\nu\}_{\nu=0}^{\infty}$ must have a cluster point $\{\bar{\mathbf{w}}_{g}^p, \bar{\mathbf{w}}_{g}^c\}$. Meaning, it exists a subsequence $\{(\mathbf{w}_{g}^p)^{\nu_1}, (\mathbf{w}_{g}^c)^{\nu_1}\}_{\nu_1=\Lambda}^{\infty}$ for $\Lambda>0$ that converges to $\{\bar{\mathbf{w}}_{g}^p, \bar{\mathbf{w}}_{g}^c\}$. Thus, the following statement holds
\begin{equation}
	\underset{\nu_1\rightarrow\infty}{\text{lim}} \{ (\mathbf{w}_{g}^p)^{\nu_1}, (\mathbf{w}_{g}^c)^{\nu_1}, ({\bm \rho}_g^p)^{\nu_1},({\bm \rho}_g^c)^{\nu_1},(\mathbf{u}_g^p)^{\nu_1},(\mathbf{u}_g^c)^{\nu_1} \} = \{ \bar{\mathbf{w}}_{g}^p, \bar{\mathbf{w}}_{g}^c, \bar{\bm \rho}_g^p, \bar{\bm \rho}_g^c , \bar{\mathbf{u}}_g^p , \bar{\mathbf{u}}_g^c  \}.
\end{equation}
Note that $\bar{\bm \rho}_g^p, \bar{\bm \rho}_g^c , \bar{\mathbf{u}}_g^p$ , and $\bar{\mathbf{u}}_g^c$ are computed based on the beamforming vectors using the continuous functions \eqref{eq:upmmse}, \eqref{eq:ucmmse}, $\rho_{g,k}^p = 1/e_{g,k,\text{mmse}}^p$, and $\rho_{i,k}^c = 1/e_{i,k,\text{mmse}}^c$. At this point, we have shown that $\{ \bar{\bm \rho}_g^p, \bar{\bm \rho}_g^c , \bar{\mathbf{u}}_g^p, \bar{\mathbf{u}}_g^c \}$ are optimal when $\{\mathbf{w}_{g}^p, \mathbf{w}_{g}^c\} = \{\bar{\mathbf{w}}_{g}^p, \bar{\mathbf{w}}_{g}^c\}$. Following up, we show that the same applies vice versa, i.e., $\{\bar{\mathbf{w}}_{g}^p, \bar{\mathbf{w}}_{g}^c\}$ is optimal when $\{ {\bm \rho}_g^p,{\bm \rho}_g^c,\mathbf{u}_g^p,\mathbf{u}_g^c \} = \{ \bar{\bm \rho}_g^p, \bar{\bm \rho}_g^c , \bar{\mathbf{u}}_g^p , \bar{\mathbf{u}}_g^c \}$. With the optimal beamforming vectors from the previous iteration, and monotonicity of the objective function we write
\begin{align}
	Q((\mathbf{w}_{g}^p)^{\nu_1+1},& (\mathbf{w}_{g}^c)^{\nu_1+1}, ({\bm \rho}_g^p)^{\nu_1+1},({\bm \rho}_g^c)^{\nu_1+1},(\mathbf{u}_g^p)^{\nu_1+1},(\mathbf{u}_g^c)^{\nu_1+1}) \nonumber \\
	&\geq Q((\mathbf{w}_{g}^p)^{\nu_1+1}, (\mathbf{w}_{g}^c)^{\nu_1+1}, ({\bm \rho}_g^p)^{\nu_1},({\bm \rho}_g^c)^{\nu_1},(\mathbf{u}_g^p)^{\nu_1},(\mathbf{u}_g^c)^{\nu_1}) \nonumber \\
	&\geq Q(\mathbf{w}_{g}^p, \mathbf{w}_{g}^c, ({\bm \rho}_g^p)^{\nu_1},({\bm \rho}_g^c)^{\nu_1},(\mathbf{u}_g^p)^{\nu_1},(\mathbf{u}_g^c)^{\nu_1}), && \forall \mathbf{w}_{g}^p, \forall \mathbf{w}_{g}^c.
\end{align}
Taking the limit in this equation, we obtain the following relation
\begin{equation}
	\bar{Q} = Q( \bar{\mathbf{w}}_{g}^p, \bar{\mathbf{w}}_{g}^c, \bar{\bm \rho}_g^p, \bar{\bm \rho}_g^c , \bar{\mathbf{u}}_g^p , \bar{\mathbf{u}}_g^c ) \geq Q( \mathbf{w}_{g}^p, \mathbf{w}_{g}^c, \bar{\bm \rho}_g^p, \bar{\bm \rho}_g^c , \bar{\mathbf{u}}_g^p , \bar{\mathbf{u}}_g^c ), \qquad\; \forall \mathbf{w}_{g}^p, \forall \mathbf{w}_{g}^c.
\end{equation}
Therefore, $\{\bar{\mathbf{w}}_{g}^p, \bar{\mathbf{w}}_{g}^c\}$ are the optimal beamforming vectors of problem \eqref{eq:OptP}.
We have now shown that $\{\bar{\mathbf{w}}_{g}^p, \bar{\mathbf{w}}_{g}^c\}$ is optimal when $\{ {\bm \rho}_g^p,{\bm \rho}_g^c,\mathbf{u}_g^p,\mathbf{u}_g^c \} = \{ \bar{\bm \rho}_g^p, \bar{\bm \rho}_g^c , \bar{\mathbf{u}}_g^p , \bar{\mathbf{u}}_g^c \}$. At last, it can be shown that $\{ \bar{\mathbf{w}}_{g}^p, \bar{\mathbf{w}}_{g}^c, \bar{\bm \rho}_g^p, \bar{\bm \rho}_g^c , \bar{\mathbf{u}}_g^p , \bar{\mathbf{u}}_g^c \}$ is a KKT solution to problem \eqref{eq:OptP} by checking the KKT conditions. At this point, we have shown the solution set generated by Algorithm \ref{alg} converges to a KKT solution of problem \eqref{eq:OptP}. KKT points are not necessarily unique, however, any sequence $\{ (\mathbf{w}_{g}^p)^{\nu}, (\mathbf{w}_{g}^c)^{\nu}, ({\bm \rho}_g^p)^{\nu},({\bm \rho}_g^c)^{\nu},(\mathbf{u}_g^p)^{\nu},(\mathbf{u}_g^c)^{\nu} \}_{\nu=0}^{\infty}$ converges to the KKT solution in the limit. This proof relates on the equivalence of problem \eqref{eq:Opt2} and problem \eqref{eq:OptP}. Therefore, we conclude that $\{ \bar{\mathbf{w}}_{g}^p, \bar{\mathbf{w}}_{g}^c, \bar{\bm \rho}_g^p, \bar{\bm \rho}_g^c , \bar{\mathbf{u}}_g^p , \bar{\mathbf{u}}_g^c, \bar{R}_g^p, \bar{R}_g^c \}$ is also a KKT solution to problem \eqref{eq:OptP}, where $Q(\mathbf{w}_{g}^p, \mathbf{w}_{g}^c, {\bm \rho}_g^p,{\bm \rho}_g^c,\mathbf{u}_g^p,\mathbf{u}_g^c) = \sum_{g\in\mathcal{G}} (\bar{R}_g^p + \bar{R}_g^c)$. This completes the proof.
\bibliographystyle{IEEEtran}
\bibliography{bibliography}
\balance
\end{document}